\algnewcommand{\Initialize}{%
  \State \textbf{Initialize:}
}
\definecolor{green1}{rgb}{0,0.5,0}
\definecolor{magenta}{rgb}{1.0, 0.11, 0.81}
\definecolor{mulberry}{rgb}{0.77, 0.29, 0.55}
\definecolor{xgray}{rgb}{0.9, 0.9, 0.9}
\def \bes{\begin{equation*}}
\def \ees{\end{equation*}}
\def \bas{\begin{align*}}
\def \eas{\end{align*}}
\def \be{\begin{equation}}
\def \ee{\end{equation}}
\def \bbm{\begin{bmatrix}}
\def \ebm{\end{bmatrix}}
\def \rvA{\texttt{A}}
\def \rvB{\texttt{B}}
\def \rvR {\texttt{R}}
\def \F{\mathbb{F}}
\newcommand{\Aij}{\rvA_{\{i,j\}}}
\newcommand{\Rij}{\rvR_{\{i,j\}}}
\newcommand{\bfA}{{\mathbf A}}
\newcommand{\bfB}{{\mathbf B}}
\newcommand{\bfC}{{\mathbf C}}
\newcommand{\bfR}{{\mathbf R}}
\newcommand{\bfS}{{\mathbf S}}
\newcommand{\bfT}{{\mathbf T}}
\newcommand{\I}{\mathrm{I}}
\newcommand{\mutinf}{\I_q}
\newcommand{\kl}[2]{\mathrm{D}_{\textrm{KL}} \left(#1 \Vert #2 \right)}
\newcommand{\leakage}[1]{\mathrm{L}_#1}
\newcommand{\slevel}[1]{\ensuremath{s_{#1}}}
\newcommand{\sr}{\ensuremath s_{\bfR}}
\newcommand{\sar}{\ensuremath s_{\bfA+\bfR}}
\newcommand{\savg}{\ensuremath s_{\text{avg}}}
\newcommand{\sdiff}{\ensuremath s_{\delta}}
\newcommand{\sparsity}{\textrm{S}}
\newcommand{\pz}{p_{1}}
\newcommand{\pzinv}{p_{1}^{\text{inv}}}
\newcommand{\pnz}{p_{3}}
\newcommand{\pext}{p_{2}}
\newcommand{\pnzinv}{p_{2,3,4}^\text{inv}}
\newcommand{\ptwoinv}{p_{2,3}^\text{inv}}
\newcommand{\master}{main node\xspace}
\newcommand{\workers}{workers\xspace}
\newtheorem{theorem}{Theorem}
\newtheorem{lemma}{Lemma}
\newtheorem{construction}{Construction}
\newtheorem{proposition}{Proposition}
\tikzset{brace/.style={decorate, decoration={brace}},
 brace mirrored/.style={decorate, decoration={brace,mirror}},
}
\newcounter{brace}
\newcommand{\Fq}{\mathbb{F}_q}
\newcommand{\Fqstar}{\Fq^\star} 
\newcommand{\pa}[1][a]{\mathrm{P}_{\rvA}}
\newcommand{\paandapr}{\mathrm{P}_{\rvA,\rvA+\rvR}}
\newcommand{\papr}{\mathrm{P}_{\rvA+\coeff\rvR}}
\newcommand{\paandr}[1][a]{\mathrm{P}_{\rvA,\rvR}}
\newcommand{\pr}[1][r]{\mathrm{P}_{\rvR}}
\newcommand{\optimizer}{\min\limits_{\mathcal{P}}} %
\newcommand{\optimizerdelta}{\min\limits_{\mathcal{P},\sdiff}} %
\newcommandx{\pra}[2][1=r, 2=a]{p_{#1\vert#2}}
\newcommand{\lossopt}{\mathrm{L}_\text{opt}}
\newcommand{\totalleakagesdiff}{\mathrm{L}(\pz,\pzinv,\pext,\pnz,\pnzinv,\sdiff)}
\newcommand{\define}{\ensuremath{:=}}
\newcommand{\lossabbrev}{\mathcal{L}}
\newcommand{\grad}{\nabla_{\pz,\pzinv,\pnz,\pext,p_4,\pnzinv,\lambda_1,\lambda_2,\lambda_3,\lambda_4,\lambda_5}}
\newcommand{\gradstragglers}{\nabla_{\pz,\pzinv,\pc,\pcinv,\lambda_1,\lambda_2,\lambda_3}}
\newcommand{\srinv}{\sr^\text{inv}}
\newcommand{\sarinv}{\sar^\text{inv}}
\newcommand{\sd}{\ensuremath{s_d}}
\newcommand{\sda}{\ensuremath{s_{d,\bfA}}}
\newcommand{\sdb}{\ensuremath{s_{d,\bfB}}}
\newcommand{\sdinv}{\ensuremath{s_d^\text{inv}}}
\newcommand{\pc}{\ensuremath{p_\star}}
\newcommand{\pcinv}{\ensuremath{p_\star^\text{inv}}}
\newcommand{\totalleakagestragglers}{\mathrm{L}(\pz,\pzinv,\pc,\pcinv)}
\newcommand{\sharea}[1][i]{\ensuremath{\mathbf{F}_{#1}}}
\newcommand{\shareb}[1][i]{\ensuremath{\mathbf{G}_{#1}}}
\newcommand{\polya}[1][x]{\ensuremath{\mathbf{f}(#1)}}
\newcommand{\polyb}[1][x]{\ensuremath{\mathbf{g}(#1)}}
\newcommand{\respoly}[1][x]{\ensuremath{\mathbf{h}(#1)}}
\newcommand{\eval}[1][i]{\ensuremath{\alpha_{#1}}}
\newcommand{\shares}{\ensuremath{n}}
\newcommand{\coeff}{\ensuremath{\eval}}
\newcommand{\relleakage}{\ensuremath{\bar{\varepsilon}}}
\title{Sparse and Private Distributed Matrix Multiplication with Straggler Tolerance}
\author{%
  \IEEEauthorblockN{
                    Maximilian Egger,
                    Marvin Xhemrishi,
                    Antonia Wachter-Zeh
                    and Rawad Bitar
                    }
                    
  \IEEEauthorblockA{%
                     Institute for Communications Engineering, Technical University of Munich, Munich, Germany \\
                    \{maximilian.egger, marvin.xhemrishi, antonia.wachter-zeh, rawad.bitar\}@tum.de
                    }\\[-1cm]
                                    
\thanks{This project has received funding from the German Research Foundation (DFG) under Grant Agreement Nos. BI 2492/1-1 and WA 3907/7-1.}
}
\date{\today}
\begin{document}

\maketitle
\begin{abstract}
This paper considers the problem of outsourcing the multiplication of two private and sparse matrices to untrusted workers. Secret sharing schemes can be used to tolerate stragglers and guarantee information-theoretic privacy of the matrices. However, traditional secret sharing schemes destroy all sparsity in the offloaded computational tasks. Since exploiting the sparse nature of matrices was shown to speed up the multiplication process, preserving the sparsity of the input matrices in the computational tasks sent to the workers is desirable. It was recently shown that sparsity can be guaranteed at the expense of a weaker privacy guarantee. Sparse secret sharing schemes with only two output shares were constructed. In this work, we construct sparse secret sharing schemes that generalize Shamir's secret sharing schemes for a fixed threshold $t=2$ and an arbitrarily large number of shares. We design our schemes to provide the strongest privacy guarantee given a desired sparsity of the shares under some mild assumptions. We show that increasing the number of shares, i.e., increasing straggler tolerance, incurs a degradation of the privacy guarantee. However, this degradation is negligible when the number of shares is comparably small to the cardinality of the input alphabet.

\end{abstract}
\vspace{-1ex}
\section{Introduction}
\label{sec:intro}
We consider the problem of offloading the multiplication of two private matrices to a set of untrusted workers. 
Matrix multiplication is considered a problem of independent interest as it forms the main computation of several machine learning algorithms such as linear regression and support vector machines. Offloading the multiplication to worker machines suffers from the presence of slow or unresponsive workers referred to as \emph{stragglers} \cite{Deanetal}. Maintaining privacy of the input matrices is of paramount importance since they can contain sensitive data that should not be revealed to untrusted parties. The literature on this topic is very rich. For brevity, we explain the works closely related to ours and refer the interested reader to \cite{ng2021comprehensive,ulukus2021survey,li2020coded} and references within for comprehensive surveys on the topic.

The use of MDS codes for straggler tolerance was introduced for distributed matrix multiplication in \cite{speeding_up_using_codes} and for general gradient descent based machine learning algorithms in \cite{tandonGradientCoding2017}. To ensure privacy of one or both input matrices and tolerate stragglers, coding theoretic techniques based on secret sharing are applied in the literature. One of the main figures of merit of the introduced schemes is the communication cost incurred by the scheme to ensure privacy and straggler tolerance. For instance, fundamental bounds on the communication cost for a fixed straggler tolerance when privacy of one or both input matrices is required were derived in~\cite{chang2018capacity}. Communication-efficient secret sharing is used in \cite{bitar2020minimizing} to mitigate a flexible number of stragglers and ensure an optimal communication cost while guaranteeing the privacy of one of the input matrices. A general framework with optimal recovery threshold for polynomial computations based on Lagrange polynomials was introduced in \cite{LCC}. In \cite{GASP,aliasgari2020private}, the encoding was carefully designed to reduce communication costs by reducing the number of workers required to outsource the computation. In \cite{makkonen2022secure}, a method was proposed to further reduce the communication when the computation of interest is the Gram of one private input matrix.

In this work, we focus on the setting where the input matrices are inherently \emph{sparse}, i.e., they have a relatively high number of entries equal to zero, e.g., \cite{wright2008robust,madarkar2021sparse}. Due to their structure, sparse matrices can be stored, transmitted and/or multiplied efficiently by leveraging the high number of zero entries~\cite{adaptive_sparse_matrix,implementing_sparse_matrix_vector}. Therefore, preserving the sparsity of the matrices is beneficial even when the computation is distributed. Directly applying MDS codes to generate computational tasks offloaded to the workers to tolerate stragglers, such as in \cite{speeding_up_using_codes}, results in sending dense computational tasks to the workers and may incur artificial delays \cite{coded_sparse_mm}. To that end, codes that output sparse computational tasks were investigated in the literature and were shown to outperform the direct use of MDS codes as a black box \cite{coded_sparse_mm,Lagrange_sparse_coding,coded_sparse_matrix_leverages_partial_stragglers}.

However, when it comes to information-theoretic privacy of the input matrices, applying ideas from secret sharing as done in most of the literature destroys all hope of maintaining sparsity. On a high level, to generate the tasks sent to the workers, the input matrices must be encoded (mixed) with random matrices whose entries are drawn independently and uniformly at random from the alphabet to which the input matrices belong. Hence, the tasks sent to the workers look like random matrices, which as a result are dense. 

The problem of generating sparse computational tasks at the expense of relaxing the privacy guarantee is considered in \cite{sparse_ISIT,xhemrishi2022efficient} when the privacy of one input matrix is to be guaranteed. The crux of \cite{sparse_ISIT} is a sparse secret sharing scheme that takes a sparse matrix $\bfA$ as input and outputs two matrices $\bfR$ and $\bfA+\bfR$ that are sparse but not completely independent from $\bfA$, hence violating the perfect information-theoretic privacy. To tolerate stragglers, the input matrix $\bfA$ is divided row-wise into several blocks, each encoded with the proposed encoding scheme. The resulting matrices are distributed to the workers using fractional repetition codes. In \cite{xhemrishi2022efficient}, a fundamental tradeoff between sparsity and privacy was derived and an improved coding scheme was constructed that outputs only two matrices and achieves this tradeoff.

In this work, we consider secret sharing schemes that take as input a sparse matrix $\bfA$ and output a pre-specified number of matrices, $n\geq 2$, of the form $\bfA+x\bfR$ for $x\in \{1,\dots,n\}$. The output matrices are sparse and slightly dependent on $\bfA$. Similarly to \cite{xhemrishi2022efficient}, we show the existence of a fundamental tradeoff between sparsity, privacy and the value of $n$. Namely, we show that for a fixed $n$, increasing the desired sparsity of the output matrices increases their dependency on $\bfA$, and hence incurs a weaker privacy guarantee. Further, increasing $n$ worsens the tradeoff, i.e., for a fixed desired sparsity increasing $n$ implies a weaker privacy guarantee. We construct optimal such secret sharing schemes. Using our construction, one can encode two private input matrices $\bfA$ and $\bfB$ into private and sparse computational tasks sent to the workers. Straggler tolerance is hence a natural property of the scheme and fractional repetition codes would not be needed.

\section{Preliminaries} 
\label{sec:prelim}
\emph{Notation:} %
We denote %
matrices by %
upper-case bold letters, e.g., %
$\bfA$. %
For a given matrix $\bfA$, the random variable representing one of its entries is denoted by the respective uppercase \emph{typewriter} letter, e.g., $\rvA$. A finite field of cardinality $q$ is denoted by $\F_q$ and its multiplicative group by $\F_q^* \define \F_q\setminus \{0\}$. We use calligraphic letters to denote sets, e.g., $\mathcal{Z}$. The set of all positive integers less than or equal to an integer $n$ is denoted by $[n]\define \{1,2,\dots, n\}$. %
The probability mass function (PMF) of a random variable $\rvA$ defined over $\mathbb{F}_q$ is denoted by $\textrm{P}_\rvA = \left[p_1, p_2, \dots, p_q \right]$, i.e., $\Pr(\rvA = i) = p_i$ for all $i\in \mathbb{F}_q$. The $q$-ary entropy of a random variable $\rvA$ is defined by $\textrm{H}_q(\rvA)$. %
Given two random variables $\rvA$ and $\rvB$, their $q$-ary mutual information is denoted by $\I_q(\rvA; \rvB)$, where all logarithms are taken to the base $q$. The Kullback-Leibler (KL) divergence between the PMFs of $\rvA$ and $\rvB$ is denoted by $\kl{\textrm{P}_{\rvA}}{\textrm{P}_{\rvB}}$.%

\emph{System Model:} We consider a setting in which a central node, called \emph{\master}, owns two large matrices $\bfA \in \F_q^{k \times \ell}$ and $\bfB \in \F_q^{\ell \times m}$ that are a realization of sparse random matrices whose entries are independently and identically distributed\footnote{In cases where matrices have correlated entries, our assumptions are motivated by applying existing sparsity-preserving pre-processing techniques, e.g., shuffling, that break the correlation between the entries.}. We denote by $\rvA$ and $\rvB$ the random variables corresponding to a single entry in $\bfA$ and $\bfB$, respectively. The sparsity levels of the matrices are denoted by $\slevel{\bfA}$ and $\slevel{\bfB}$ as defined next.

{\definition(Sparsity level of a matrix)\label{def:sparsity} {The sparsity level $\sparsity(\bfA)$ of a matrix $\bfA$ with entries independently and identically distributed is equal to the probability of an entry $\rvA$ being equal to $0$, i.e.,%
\vspace{-0.2cm}
\begin{equation*}
\slevel{\bfA} \define \sparsity(\bfA) = \Pr\{\rvA = 0\}\,.\vspace{-0.1cm}
\end{equation*}}}

\newcommand{\powersA}{\ensuremath{\beta}}
\newcommand{\powersB}{\ensuremath{\gamma}}

The \master offloads the computation of $\bfC \define \bfA\bfB \in \F_q^{k \times m}$ to $n$ \workers. The workers are \emph{honest-but-curious}, i.e., they will follow the computation protocol but want to obtain insights about the \master's private data. Privacy can be guaranteed by encoding the input matrices using secret sharing with parameters $n$, $t\leq n$, and $z<t$. Each input matrix is split into $t-z$ parts $\bfA_1,\dots,\bfA_{t-z}, \bfB_1,\dots,\bfB_{t-z}$ and encoded with polynomials of the form $\polya[x] = \sum_{\ell=1}^{t-z}\bfA_\ell x^{\powersA_{\ell-1}} + \sum_{j=1}^z\bfR_{j} x^{\powersA_{t-z+j}}$ and $\polyb[x] = \sum_{\ell=1}^{t-z}\bfB_\ell x^{\powersB_{\ell-1}} + \sum_{j=1}^z\bfS_{j} x^{\powersB_{t-z+j}}$, where $\powersA_\ell$ and $\powersB_\ell$ $\forall \ell \in [t]$ are carefully chosen distinct elements from $\mathbb{F}_q^\star$, e.g., \cite{aliasgari2020private,GASP}, and $\bfR_j$ and $\bfS_j$ are matrices with entries drawn independently and uniformly at random from $\mathbb{F}_q$. The compute task of worker $i\in [n]$ is to multiply $\sharea \define \polya[\coeff]$ by $\shareb \define \polyb[\coeff]$, where the $\coeff$'s are distinct non-zero elements of $\mathbb{F}_q$. Any collection of $z$ workers learns nothing about $\bfA$ and $\bfB$; and the choice $\powersA_\ell$'s and $\powersB_\ell$'s guarantees that any collection of $2t-1$ results of the form $\sharea\cdot \shareb$ suffices to obtain $\bfC$. However, the \workers multiply two \emph{dense} matrices irrespective of the sparsity of $\bfA$ and $\bfB$. In this work, we construct secret sharing schemes that assign sparse matrices to the \workers. 

We restrict ourselves to constructing schemes with $z=1$, known as the non-colluding regime in which the \workers do neither communicate nor collaborate to learn the \master's data, and with $t=2$. 
The \master then encodes the input matrices $\bfA$ and $\bfB$ into $n$ computational tasks called \emph{shares} of the form\footnote{This encoding assigns tasks of the same size as $\bfA$ and $\bfB$. To assign smaller tasks to the {\workers}, $\bfA$ and $\bfB$ can be split into smaller sub-matrices and encoded into several degree-$1$ polynomial pairs. Stragglers tolerance is guaranteed by assigning extra evaluations of those polynomial pairs.} $(\sharea, \shareb)$, $i\in [n]$, where $\sharea \define \polya[\coeff] \define \bfA + \coeff\bfR$ and $\shareb[i] \define \polyb[\coeff] \define \bfB + \coeff \bfS$. The random matrices $\bfR$ and $\bfS$ are designed depending on $\bfA$ and $\bfB$, respectively. %
The main goal is to design $\bfR$ and $\bfS$ to satisfy the desired sparsity guarantees on $\polya[\coeff]$ and $\polyb[\coeff]$ for all $\coeff \in [\shares]$ and the required privacy guarantee as defined next.

\emph{Privacy Measure:} %
We say that the share $(\sharea, \shareb)$ leaks $\varepsilon \define \I_q(\rvA;\rvA+\coeff\rvR)$ information about $\bfA$. The same is defined for $\bfB$. The relative leakage is defined as $\relleakage \define \I_q(\rvA;\rvA+\coeff\rvR)/\textrm{H}_q(\rvA)$, i.e., the ratio of information leakage to the entropy of the private matrices. For $\varepsilon = 0$, \emph{perfect} information-theoretic privacy is achieved by secret sharing with the sparsity of all shares being $1/q$. The setting for $\varepsilon>0$ and $n=2$ was considered in \cite{sparse_ISIT,xhemrishi2022efficient}. We examine the setting for $\varepsilon>0$ and $n\geq3$. %
We are interested in the fundamental theoretical tradeoff that exists between the desired sparsity of the shares and the privacy guarantees that can be given. %

\section{Motivating example}

\label{sec:example}
We start with a motivating example highlighting the disadvantage of a straightforward application of the encoding in \cite{xhemrishi2022efficient} to sparse private matrix-matrix multiplication. 
Following the scheme proposed in~\cite{xhemrishi2022efficient}, the \master generates $\bfR \in \F_q^{k \times \ell}$ and $\bfS \in \F_q^{\ell \times m}$, $q\geq 2$, %
and creates four shares $\bfR, \bfA+\bfR, \bfS$ and $\bfB+\bfS$. The computational tasks, sent to four \workers, are%
\begin{align*}
    \bfT_1 &= (\bfA + \bfR) (\bfB + \bfS), \quad & %
    \bfT_2 &=(\bfA + \bfR) \bfS  %
    \\
    \bfT_3 &=\bfR (\bfB + \bfS), \quad & %
    \bfT_4 &=\bfR \bfS\,.  %
\end{align*} \vspace{-0.2cm}
Observe that the \master can reconstruct $\bfC$ by performing
 \begin{equation*}
     \bfC = \bfT_1 - \bfT_2 - \bfT_3 + \bfT_4\,. \vspace{-0.2cm}
 \end{equation*}
To tolerate stragglers, the \master resorts to carefully dividing $\bfA$ and $\bfB$ into block matrices, encoding each block into the aforementioned tasks and using fractional repetition task assignments. 
However, for $q\geq3$, a clever change in the computational tasks decreases the number of required workers from four to only three by using $\bfT_1' = \bfT_1$ shown above and \begin{align*}
    \bfT'_2 &= (\bfA + 2^{-1}{\bfR}) \bfS, &
    \bfT'_3 &= \bfR (\bfB + 2^{-1}{\bfS}), 
    \vspace{-0.4cm}
\end{align*}
where $2^{-1}$ is the multiplicative inverse of $2$ in $\mathbb{F}_q$. %
Note that
\begin{equation*}\vspace{-0.1cm}
    \bfC = \bfT'_1 - \bfT'_2 - \bfT'_3\,.
\end{equation*}
Decreasing the number of workers to three comes at the expense of requiring the shares of the form $\bfA+\bfR$, $\bfR$ and $\bfA+2^{-1}\bfR$ to be sparse and private. Hence, adding more constraints to designing the random matrix $\bfR$. As depicted in~\cref{fig:leak_per_2_3}, those constraints lead to more leakage for the same desired sparsity level than the naive scheme. %

Motivated by this example, we ask the following question: \emph{Can encoding functions of the form $\polya[x] = \bfA+x\bfR$ and $\polyb[x] = \bfB+x\bfS$ be used to tolerate stragglers and maintain sparsity of the shares? If so, how does this encoding affect the privacy guarantee?}

In this work, we answer the question in the affirmative. We show that straggler tolerance via encoding polynomials of the form $\polya[x] = \bfA+x\bfR$ comes at the expense of a degradation of the privacy guarantee as depicted in \cref{fig:leak_per_2_3}. We formulate this problem as a non-linear convex optimization problem, which we solve under mild assumptions explained in the sequel. The optimization results in a sparse secret sharing scheme achieving maximum privacy guarantees for a desired sparsity.

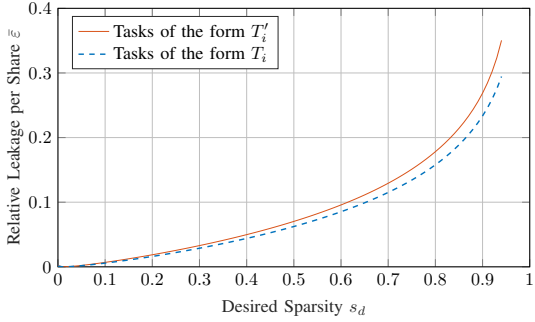
\begin{figure}
    \centering
    \resizebox{.8\linewidth}{!}{\definecolor{mycolor1}{rgb}{0.00000,0.44700,0.74100}%
\definecolor{mycolor2}{rgb}{0.85000,0.32500,0.09800}%
\begin{tikzpicture}

\begin{axis}[%
width=3.633in,
height=2.007in,
at={(0.609in,0.402in)},
scale only axis,
xmin=0,
xmax=1,
xlabel style={font=\color{white!15!black}},
xlabel={Desired Sparsity $s_{d}$},
ymin=0,
ymax=0.4,
ylabel style={font=\color{white!15!black}, yshift = -.4cm},
ylabel={Relative Leakage per Share $\relleakage$},
axis background/.style={fill=white},
xmajorgrids,
ymajorgrids,
legend style={at={(0.03,0.97)}, anchor=north west, legend cell align=left, align=left, draw=white!15!black}
]

\addplot [color=mycolor2]
  table[row sep=crcr]{%
0.000196811651249754	0.00180019440366317\\
0.0101968116512498	4.23366220106419e-06\\
0.0201968116512498	0.000211499620962855\\
0.0301968116512498	0.000731836171384254\\
0.0401968116512498	0.00140802488957583\\
0.0501968116512498	0.00218346483885121\\
0.0601968116512498	0.00303066536020181\\
0.0701968116512498	0.00393404188929805\\
0.0801968116512498	0.00488387545388441\\
0.0901968116512498	0.00587370397985351\\
0.10019681165125	0.00689903319333268\\
0.11019681165125	0.00795663557736405\\
0.12019681165125	0.00904414068607386\\
0.13019681165125	0.0101597819027646\\
0.14019681165125	0.0113022326958671\\
0.15019681165125	0.0124704967739389\\
0.16019681165125	0.0136638321136644\\
0.17019681165125	0.01488169705569\\
0.18019681165125	0.0161237112286752\\
0.19019681165125	0.0173896267089858\\
0.20019681165125	0.0186793064160728\\
0.21019681165125	0.0199927077331552\\
0.22019681165125	0.0213298699754698\\
0.23019681165125	0.0226909047429996\\
0.24019681165125	0.0240759884726909\\
0.25019681165125	0.0254853566954285\\
0.26019681165125	0.0269192996357022\\
0.27019681165125	0.0283781588860047\\
0.28019681165125	0.0298623249558161\\
0.29019681165125	0.0313722355446818\\
0.30019681165125	0.0329083744257739\\
0.31019681165125	0.0344712708541489\\
0.32019681165125	0.0360614994352241\\
0.33019681165125	0.0376796804055802\\
0.34019681165125	0.0393264802913541\\
0.35019681165125	0.0410026129201191\\
0.36019681165125	0.0427088407709609\\
0.37019681165125	0.0444459766549237\\
0.38019681165125	0.0462148857245597\\
0.39019681165125	0.048016487817198\\
0.40019681165125	0.0498517601420835\\
0.41019681165125	0.0517217403268395\\
0.42019681165125	0.0536275298440234\\
0.43019681165125	0.0555702978439922\\
0.44019681165125	0.0575512854260378\\
0.45019681165125	0.059571810385941\\
0.46019681165125	0.0616332724848884\\
0.47019681165125	0.0637371592922712\\
0.48019681165125	0.065885052663424\\
0.49019681165125	0.0680786359230877\\
0.50019681165125	0.0703197018365755\\
0.51019681165125	0.0726101614635399\\
0.52019681165125	0.0749520540043091\\
0.53019681165125	0.0773475577663698\\
0.54019681165125	0.0797990023993159\\
0.55019681165125	0.0823088825710604\\
0.56019681165125	0.084879873287198\\
0.57019681165125	0.0875148470900787\\
0.58019681165125	0.0902168934156999\\
0.59019681165125	0.0929893404365058\\
0.60019681165125	0.0958357797786651\\
0.61019681165125	0.0987600945758072\\
0.62019681165125	0.101766491410955\\
0.63019681165125	0.104859536808481\\
0.64019681165125	0.108044199073918\\
0.65019681165125	0.111325896448288\\
0.66019681165125	0.114710552754588\\
0.67019681165125	0.118204661979421\\
0.68019681165125	0.121815363568716\\
0.69019681165125	0.125550530645159\\
0.70019681165125	0.129418873905953\\
0.71019681165125	0.133430064673919\\
0.72019681165125	0.1375948815093\\
0.73019681165125	0.141925386023423\\
0.74019681165125	0.14643513518137\\
0.75019681165125	0.151139439601068\\
0.76019681165125	0.156055680387154\\
0.77019681165125	0.161203701229487\\
0.78019681165125	0.166606298375183\\
0.79019681165125	0.172289839457244\\
0.80019681165125	0.178285054295157\\
0.81019681165125	0.184628058692366\\
0.82019681165125	0.191361699248539\\
0.83019681165125	0.198537348844359\\
0.84019681165125	0.206217348391936\\
0.85019681165125	0.214478397992806\\
0.86019681165125	0.223416382130543\\
0.87019681165125	0.233153432055816\\
0.88019681165125	0.243848613977697\\
0.89019681165125	0.25571476855144\\
0.90019681165125	0.269046387376769\\
0.91019681165125	0.284268737667368\\
0.92019681165125	0.302031822188085\\
0.93019681165125	0.323411580444017\\
0.94019681165125	0.350419499729562\\
};
\addlegendentry{Tasks of the form $T_i^\prime$}

\addplot [color=mycolor1, dashed, thick]
  table[row sep=crcr]{%
0.000196811651249754	0.00104356458553331\\
0.0101968116512498	3.12943990193627e-06\\
0.0201968116512498	0.000162780831708812\\
0.0301968116512498	0.000576207970147309\\
0.0401968116512498	0.00112600809287077\\
0.0501968116512498	0.00176671622869742\\
0.0601968116512498	0.00247516199629072\\
0.0701968116512498	0.00323772046880563\\
0.0801968116512498	0.00404563635338079\\
0.0901968116512498	0.00489293593989038\\
0.10019681165125	0.00577536667028682\\
0.11019681165125	0.00668980691463892\\
0.12019681165125	0.00763391401693252\\
0.13019681165125	0.00860590281468956\\
0.14019681165125	0.00960440011055302\\
0.15019681165125	0.0106283456065165\\
0.16019681165125	0.0116769224554749\\
0.17019681165125	0.0127495073603729\\
0.18019681165125	0.0138456339659344\\
0.19019681165125	0.0149649655276773\\
0.20019681165125	0.0161072742061976\\
0.21019681165125	0.0172724251910615\\
0.22019681165125	0.0184603644116066\\
0.23019681165125	0.0196711089579194\\
0.24019681165125	0.0209047395828201\\
0.25019681165125	0.0221613948265384\\
0.26019681165125	0.02344126642581\\
0.27019681165125	0.0247445957549116\\
0.28019681165125	0.0260716711084058\\
0.29019681165125	0.0274228256812255\\
0.30019681165125	0.028798436135985\\
0.31019681165125	0.0301989216733688\\
0.32019681165125	0.0316247435414128\\
0.33019681165125	0.03307640493505\\
0.34019681165125	0.0345544512496702\\
0.35019681165125	0.0360594706623973\\
0.36019681165125	0.0375920950230215\\
0.37019681165125	0.0391530010434503\\
0.38019681165125	0.040742911780544\\
0.39019681165125	0.0423625984125329\\
0.40019681165125	0.0440128823141388\\
0.41019681165125	0.0456946374401458\\
0.42019681165125	0.0474087930317455\\
0.43019681165125	0.049156336664537\\
0.44019681165125	0.0509383176618566\\
0.45019681165125	0.0527558509021153\\
0.46019681165125	0.0546101210543127\\
0.47019681165125	0.0565023872819149\\
0.48019681165125	0.0584339884619842\\
0.49019681165125	0.0604063489740726\\
0.50019681165125	0.0624209851220245\\
0.51019681165125	0.0644795122618122\\
0.52019681165125	0.066583652720032\\
0.53019681165125	0.0687352446011135\\
0.54019681165125	0.0709362515969789\\
0.55019681165125	0.0731887739313645\\
0.56019681165125	0.0754950605928108\\
0.57019681165125	0.0778575230362364\\
0.58019681165125	0.0802787505638662\\
0.59019681165125	0.0827615276332674\\
0.60019681165125	0.0853088533847376\\
0.61019681165125	0.0879239637340549\\
0.62019681165125	0.09061035644196\\
0.63019681165125	0.0933718196514962\\
0.64019681165125	0.0962124644822628\\
0.65019681165125	0.099136762391509\\
0.66019681165125	0.102149588162043\\
0.67019681165125	0.10525626956439\\
0.68019681165125	0.108462644976322\\
0.69019681165125	0.111775130541266\\
0.70019681165125	0.115200798827624\\
0.71019681165125	0.118747471440128\\
0.72019681165125	0.122423828668149\\
0.73019681165125	0.126239540084614\\
0.74019681165125	0.130205421102955\\
0.75019681165125	0.134333621958163\\
0.76019681165125	0.138637857544362\\
0.77019681165125	0.143133689224104\\
0.78019681165125	0.147838873431826\\
0.79019681165125	0.152773797089302\\
0.80019681165125	0.157962027243634\\
0.81019681165125	0.163431013035615\\
0.82019681165125	0.169212993873581\\
0.83019681165125	0.175346191408375\\
0.84019681165125	0.18187639941308\\
0.85019681165125	0.188859143315447\\
0.86019681165125	0.196362674816997\\
0.87019681165125	0.204472224397093\\
0.88019681165125	0.213296209052346\\
0.89019681165125	0.222975593310566\\
0.90019681165125	0.233698563732722\\
0.91019681165125	0.245724646497749\\
0.92019681165125	0.259426754253781\\
0.93019681165125	0.275370284433525\\
0.94019681165125	0.29447797278307\\
};
\addlegendentry{Tasks of the form $T_i$}

\end{axis}
\end{tikzpicture}
    \vspace{-1ex}
    \caption{Desired sparsity level of the shares $\sd$ vs. the relative leakage $\relleakage$ per share, for the case where the input matrix has sparsity level $s = 0.95$ and lives in a finite field of cardinality $q=89$.}
    \vspace{-0.5cm}
    \label{fig:leak_per_2_3}
\end{figure}

\section{Straggler-Tolerant Private Distributed Sparse Matrix-Matrix Multiplication} \label{sec:scheme}

The crux of our straggler-tolerant scheme is constructing a family of sparse secret sharing schemes that generalizes the well-known Shamir's secret sharing scheme ($z=t-1$) \cite{S79, Blakley1979} for $t=2$.
The idea is as follows. Given an encoding rule 
\begin{align*}
    \polya &= \bfA + x \bfR, &
    \polyb &= \bfB + x \bfS ,
\end{align*}
we can define a polynomial $\respoly[x]$ as
\begin{equation}\label{eq:h_pol}
    \respoly = \polya \cdot \polyb = \bfA \bfB + x (\bfR \bfB + \bfA \bfS) + x^2 \bfR \bfS\,.
\end{equation}

The polynomial $\respoly[x]$ is a degree-$2$ polynomial that satisfies $\respoly[0] = \bfC = \bfA\bfB$. In order to interpolate $\respoly[x]$, the \master requires three distinct evaluations of $\respoly[x]$ which can be obtained by assigning to the \workers computational tasks of the form $(\polya[\coeff],\polyb[\coeff])$ for $n\geq 3$ distinct symbols $\coeff\in \mathbb{F}_q^\star$ and waiting for three workers to return their results.
If sparsity is of no concern, the entries of $\bfR$ and $\bfS$ can be drawn independently and uniformly at random from $\mathbb{F}_q$ and the encoding would work for any value of $n$.

However, to guarantee sparsity in the shares, the entries of $\bfR$ and $\bfS$ must be drawn carefully and dependently on the values of $\bfA$ and $\bfB$, respectively, and on the number of the desired shares $n$ and the chosen symbols $\coeff \in \mathbb{F}_q^\star$. The choice of $\coeff$ and the distribution from which $\bfR$ and $\bfS$ are generated will affect the privacy guarantee and the sparsity of the shares.

Our main result can be summarized as follows. Motivated by the implication of Lemma~\ref{lemma:sdiff0}, we generate the entries of $\bfR$ according to the distribution shown in \cref{def:model}. The entries of $\bfS$ are generated similarly. We give in \cref{thm:optimal_pmf_straggler_tol} the best privacy guarantees achievable for a desirable sparsity constraint for such random matrices. As a result, we construct optimal sparse secret sharing schemes.

\begin{lemma} \label{lemma:sdiff0}
    Let $\bfA$ be a matrix with independent and identically distributed entries such that the non-zero values are chosen uniformly at random from $\mathbb{F}_q^*$. Consider a scheme as in \cite{xhemrishi2022efficient} (\cref{def:model} for $\shares = 2$) that encodes $\bfA$ into two sparse secret shares $\bfR$ and $\bfA+\bfR$ with sparsity levels $\sr$ and $\sar$, respectively. Then, the minimum leakage for a desired average sparsity $\savg\define\frac{\sar+\sr}{2}$ is obtained for $\sar=\sr =\savg$.
\end{lemma}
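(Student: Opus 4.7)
The plan is to exploit a natural symmetry between the two shares $\bfR$ and $\bfA+\bfR$, combined with the convexity of mutual information. Parametrize the scheme by the conditional PMF $p(r|a)$ (the input marginal $\pa$ being fixed), and write the two per-share leakages as $L_1 \define \I_q(\rvA;\rvR)$ and $L_2 \define \I_q(\rvA;\rvA+\rvR)$. The scheme's ``leakage'' is then $\max(L_1,L_2)$ (or any averaging of the two; the bound below controls both).

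First I would introduce the involution $p \mapsto p'$ defined by $p'(r|a) \define p(-a-r|a)$. A direct change of variables shows that, under $p'$, the joint $(\rvA,\rvR)$ has the same distribution as $(\rvA,-\rvA-\rvR^{\star})$ under $p$, where $\rvR^{\star} \sim p(\cdot|\rvA)$. Two short calculations then give (i) the sparsities swap, namely $\Pr_{p'}[\rvR=0] = \sar$ and $\Pr_{p'}[\rvA+\rvR=0] = \sr$, and (ii) the leakages swap, i.e., under $p'$ the leakage of share $\rvR$ equals $L_2$ and that of share $\rvA+\rvR$ equals $L_1$. Step (ii) relies on negation being a bijection on $\F_q$, so that $\I_q(\rvA;-\rvA-\rvR^{\star}) = \I_q(\rvA;\rvA+\rvR^{\star})$ and $\I_q(\rvA;-\rvR^{\star}) = \I_q(\rvA;\rvR^{\star})$.

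Next I would form the mixture conditional $\tilde p(r|a) \define \tfrac{1}{2}p(r|a) + \tfrac{1}{2}p'(r|a)$. By linearity of the sparsity constraints in $p(r|a)$, both sparsities under $\tilde p$ equal $\savg$. Since $\I_q(\rvA;\rvY)$ is convex in the channel $p(y|a)$ for a fixed input marginal, applying convexity to both quantities of interest yields, under $\tilde p$,
\begin{equation*}
    \I_q(\rvA;\rvR) \leq \tfrac{1}{2}(L_1+L_2) \quad\text{and}\quad \I_q(\rvA;\rvA+\rvR) \leq \tfrac{1}{2}(L_1+L_2),
\end{equation*}
both of which are no larger than $\max(L_1,L_2)$.

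Hence any feasible scheme with unequal $(\sr,\sar)$ averaging to $\savg$ is dominated, leakage-wise, by a symmetric scheme with $\sr=\sar=\savg$, which proves the lemma. The main delicate step is the bookkeeping behind the involution in (i)--(ii): one has to verify that $r \mapsto -a-r$ preserves feasibility on $\F_q$ and correctly exchanges both the sparsities and the two mutual informations. Once that involution is in place, the conclusion follows from a completely standard convexity argument; in particular, the assumption that the non-zero entries of $\bfA$ are uniform on $\F_q^{\star}$ is not needed for this reduction.
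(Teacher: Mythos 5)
Your proof is correct, but it takes a genuinely different route from the paper. The paper works inside the Lagrangian machinery: it writes out the explicit leakage expression for the parametric family \eqref{eq:dependent_on_0}--\eqref{eq:dependent_on_nz}, computes $\partial\lossabbrev/\partial\sdiff$ together with the stationarity conditions in $\pext,\pnz$, and shows after some algebra that the resulting equation is satisfied exactly at $\sdiff=0$, concluding by convexity of the overall program. Your argument replaces all of that computation with a symmetry-plus-convexity reduction: the involution $p(r|a)\mapsto p(-a-r|a)$ swaps the two shares (both their sparsities $\sr\leftrightarrow\sar$ and their leakages $L_1\leftrightarrow L_2$, the latter because the residual maps $y\mapsto -y$ do not depend on $\rvA$), and the midpoint mixture then has both sparsities equal to $\savg$ and, by convexity of $\I_q(\rvA;\cdot)$ in the channel for fixed input law, total (and max) leakage no larger than the original. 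This is cleaner, avoids the explicit leakage formula of \cref{lemma:leakage_delta} entirely, and---as you note---does not use uniformity of the non-zero entries of $\rvA$, which the paper's computation does rely on. One point you flag as ``bookkeeping'' deserves to be made explicit, since the lemma is stated for schemes of the form \eqref{eq:dependent_on_0}--\eqref{eq:dependent_on_nz}: the involution maps this parametric family to itself (it fixes $\pz$ and swaps $\pext\leftrightarrow\pnz$), and the midpoint mixture stays in the family with $\pext=\pnz=\tfrac12(\pext+\pnz)$, so the reduction is valid for the constrained problem and not only for the unconstrained optimization over all of $\mathcal{P}$. What the paper's heavier computation buys in exchange is that it reuses the exact stationarity system needed later for \cref{thm:optimal_pmf_straggler_tol}, and it exhibits $\sdiff=0$ as the unique stationary point rather than merely showing that some symmetric minimizer exists; for the statement of \cref{lemma:sdiff0} as written, your existence argument suffices.
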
 

The proof will be given in an extended version of this paper.
\cref{lemma:sdiff0} implies that when designing a sparse Shamir secret sharing scheme for $\shares =2$, the minimum leakage is obtained by requiring both shares to have the same desired sparsity. %
As such, we constrain ourselves to schemes where all $\shares$ shares have the same sparsity level and present \cref{def:model} which generalizes the scheme in \cite{xhemrishi2022efficient}.

\begin{construction}[Conditional Distribution of the Random Matrix]\label{def:model}The entries of the matrix $\bfR$ are drawn according to the following conditional distribution.
\begin{align}
    \Pr\{\rvR = r \lvert \rvA = 0\} \!&\!= \begin{cases} 
    \pz, &r = 0 \\   \frac{1-\pz}{q-1} , &r \neq 0,
    \end{cases} \label{eq:gdependent_on_0_final}\\
    \Pr\{\rvR = r \lvert \rvA = a\} \!&\!= \begin{cases} 
    \pc, &r \in \{-\frac{a}{\coeff}\}_{i \in [\shares]} \\
    \frac{1-\shares \pc}{q-\shares} , &r \not\in \{-\frac{a}{\coeff}\}_{i \in [\shares]}, \\
    \end{cases} \label{eq:gdependent_on_nz_final}
\end{align}
where $r\in \F_q$, $a \in \F_q^*$, and negative numbers correspond to the additive inverses of the absolute values in $\F_q^*$.
\end{construction}
\vspace{-0.3cm}
\cref{def:model} captures the following ideas. The parameter $\pc$ is responsible for tailoring the probability of creating $0$'s in a single share. That is, the probability values used to tweak the sparsity with the privacy of each of the shares $i \in [\shares]$ given a non-zero realization of $\rvA = a$ by choosing $\rvR = -a/\coeff$ are the same for all shares. The second important parameter $\pz$ is responsible for inheriting a $0$ in every share. Particularly, if the realization of $\rvA$ is zero, then the probability of choosing $\rvR=0$ equally affects the sparsity of all shares. Choosing $\pz, \pc > 1/q$ leads to information leakage through the shares, which is why we have to optimize over these parameters to result in a desired sparsity while revealing as little as possible about the private matrix.

To maximize the entropy of the shares and hence minimize the leakage, we choose the remaining values of $\bfR$ (that do not affect the sparsity of any of the shares) to be uniformly distributed among the remaining symbols in the field. Our choice is motivated by the assumption that the non-zero entries of $\bfA$ are uniformly distributed over $\mathbb{F}_q^*$.

We jointly choose $\pz, \pc $ and the evaluation points $\coeff$ of $\polya$ and $\polyb$ such that all shares $\sharea\define \polya[\coeff]$ and $\shareb\define \polyb[\coeff]$ attain a desired level of sparsity defined as $\sda \define \sparsity(\sharea[1]) \! = \! \sparsity(\sharea[2]) \! = \! \cdots \! = \! \sparsity(\sharea[\shares])$ and $\sdb = \sparsity(\shareb[1]) \! = \! \sparsity(\shareb[2]) \! = \! \cdots \! = \! \sparsity(\shareb[\shares])$, respectively. Since the shares of $\bfA$ and $\bfB$ are generated separately, the sparsities of $\bfA$ and $\bfB$ may differ. %

Throughout the rest of the paper, we consider only one instance of sparse secret sharing with desired sparsity $\sd\define \sda$ for the private matrix $\bfA$ and refer to the sparsity level of $\bfA$ as $s\define \slevel{\bfA}$. All results hold equally for $\bfB$ and the generation of $\bfS$, respectively. %

\begin{lemma}\label{lemma:sparsity}
Following \cref{def:model}, every share has the following level of sparsity:
\begin{align*}
    \sd &= \pz s + \pc (1-s).
\end{align*}
\vspace{-6ex}
\end{lemma}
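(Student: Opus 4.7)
The plan is to compute the probability that a single entry of the share $\sharea = \bfA + \coeff \bfR$ equals zero, using the law of total probability conditioned on the value of the corresponding entry $\rvA$ of the private matrix. By \cref{def:sparsity}, the sparsity of $\sharea$ is exactly $\Pr\{\rvA + \coeff \rvR = 0\}$, and since $\coeff \in \mathbb{F}_q^\star$ is invertible, the event $\rvA + \coeff \rvR = 0$ is equivalent to $\rvR = -\rvA/\coeff$.

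First I would split the total probability into two contributions: one for $\rvA = 0$ (which has probability $s$) and one for $\rvA = a \neq 0$ (which has total probability $1-s$). In the first case, the event becomes $\rvR = 0$, whose conditional probability is $\pz$ by \eqref{eq:gdependent_on_0_final}. In the second case, for every fixed $a \in \mathbb{F}_q^\star$, the value $-a/\coeff$ belongs to the set $\{-a/\coeff\}_{i \in [\shares]}$ (specifically, it corresponds to the given $i$), so the conditional probability $\Pr\{\rvR = -a/\coeff \mid \rvA = a\}$ equals $\pc$ by \eqref{eq:gdependent_on_nz_final}, independently of the specific value of $a$.

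Combining these two contributions yields
\begin{equation*}
    \sd = \Pr\{\rvA + \coeff \rvR = 0\} = s \cdot \pz + (1-s) \cdot \pc,
\end{equation*}
which is the claimed expression. The argument is essentially a one-line application of total probability; the only minor subtlety is observing that $-a/\coeff$ lies in the exceptional support of the conditional distribution for every nonzero $a$, so the $\pc$ branch of \eqref{eq:gdependent_on_nz_final} always applies, and that this probability is the same across all $a \in \mathbb{F}_q^\star$, allowing us to factor $\pc$ out of the sum over $a$. I do not foresee any serious obstacle here beyond correctly citing the two branches of \cref{def:model}.
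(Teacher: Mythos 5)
Your proof is correct and matches the paper's argument: the paper simply states that the lemma is a direct consequence of \eqref{eq:gdependent_on_0_final} and \eqref{eq:gdependent_on_nz_final}, and your computation via total probability over $\rvA$ (using that $\coeff$ is invertible and that $-a/\coeff$ always falls in the $\pc$-branch of the conditional distribution) is exactly the omitted calculation.
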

\begin{proof}
    This result is a direct consequence of \eqref{eq:gdependent_on_0_final} and \eqref{eq:gdependent_on_nz_final}.
\end{proof}

The leakage minimizing parameters for $\pz$ and $\pc$ for $\shares$ shares each with sparsity level $\sd$ are given by \cref{thm:optimal_pmf_straggler_tol}, for which we define the following quantities. Let $s_1 \define \sd/(1-s)$, $s_2 \define (s-\sd)/(1-s)$ and $c \define (q-1)/(q-\shares)^\shares$, we have %
\begin{align*}
    b_{\shares+1} &= -1 - c(-\shares)^{\shares} \\
    b_{\shares} &= c \big( s_1 (-\shares)^\shares - \shares (-\shares)^{\shares-1} \big) -s_2 \\
    b_k &= c \left(s_1 \binom{\shares}{k} (-\shares)^k - \binom{\shares}{k-1} (-\shares)^{k-1} \right), \forall k\in [\shares-1] \\
    b_0 &= c s_1.
\end{align*}

\begin{theorem} \label{thm:optimal_pmf_straggler_tol}
Given a desired sparsity $\sd$ of the shares, the optimal PMF of the form given in \eqref{eq:gdependent_on_0_final} and \eqref{eq:gdependent_on_nz_final} that minimizes the leakage of $\shares$ shares is obtained by setting $\pc$ as the real root of the polynomial $\sum_{j = 0}^{n+1} b_j \pc^j$ in $\pc$ that satisfies $0 \leq \pc (1-s) \leq \min\{\sd, \frac{1}{n} \}$ and computing $\pz$ as
\begin{equation*}
    \pz = \frac{\sd - \pc(1-s)}{s}.
\end{equation*}
\end{theorem}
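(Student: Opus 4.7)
The plan is to express the minimization of per-share leakage as a univariate concave optimization in $\pc$ (using Lemma~\ref{lemma:sparsity} to eliminate $\pz$) and to derive the first-order optimality condition, which will turn into the polynomial equation stated in the theorem.

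I first show that the per-share leakage $\varepsilon=\I_q(\rvA;\rvA+\coeff\rvR)$ is the same for every $i\in[\shares]$ and depends on the design parameters only through $(\pz,\pc)$. To this end, I compute the marginal of $\rvA+\coeff\rvR$ and show, using a calculation analogous to the proof of Lemma~\ref{lemma:sparsity}, that it equals $\sd$ at $0$ and $(1-\sd)/(q-1)$ uniformly on $\F_q^\star$; hence $\entropy(\rvA+\coeff\rvR)$ depends only on $\sd$, and in particular is independent of $\coeff$. Next, I condition on $\rvA$: for $\rvA=0$ the kernel is a permutation of the law of $\rvR\mid\rvA=0$ through the bijection $r\mapsto\coeff r$, giving entropy $\phi(\pz)\define-\pz\log\pz-(1-\pz)\log\tfrac{1-\pz}{q-1}$; for $\rvA=a\neq 0$, the $\shares$ distinguished points $\{-a/\coeff_j\}_{j\in[\shares]}$ are mapped bijectively by $r\mapsto a+\coeff r$ to $\shares$ distinct output symbols (including $0$), each carrying mass $\pc$, while the remaining $q-\shares$ symbols share mass $(1-\shares\pc)/(q-\shares)$ uniformly. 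The resulting entropy is $\psi(\pc)\define-\shares\pc\log\pc-(1-\shares\pc)\log\tfrac{1-\shares\pc}{q-\shares}$, independent of $a$ and $\coeff$. Thus $\varepsilon=\entropy(\rvA+\coeff\rvR)-\bigl(s\,\phi(\pz)+(1-s)\,\psi(\pc)\bigr)$ for every share, so minimizing the leakage across the $\shares$ shares reduces to a single scalar objective.

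Since $\entropy(\rvA+\coeff\rvR)$ is fixed once $\sd$ is, minimizing $\varepsilon$ becomes maximization of $f(\pz,\pc)\define s\,\phi(\pz)+(1-s)\,\psi(\pc)$ under the linear constraint $s\pz+(1-s)\pc=\sd$ of Lemma~\ref{lemma:sparsity}. Substituting $\pz=(\sd-(1-s)\pc)/s$ turns this into an unconstrained maximization of a strictly concave function of $\pc$ (both $\phi$ and $\psi$ are entropies, hence concave, and $\pz$ is affine in $\pc$). Differentiating with $d\pz/d\pc=-(1-s)/s$, $\phi'(\pz)=\log\tfrac{1-\pz}{(q-1)\pz}$ and $\psi'(\pc)=\shares\log\tfrac{1-\shares\pc}{(q-\shares)\pc}$ yields the stationarity condition
\begin{equation*}
\log\frac{1-\pz}{(q-1)\pz}=\shares\,\log\frac{1-\shares\pc}{(q-\shares)\pc},
\end{equation*}
equivalently $(1-\pz)(q-\shares)^{\shares}\pc^{\shares}=(q-1)\,\pz\,(1-\shares\pc)^{\shares}$.

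The last step is purely algebraic. Substituting $\pz=(\sd-(1-s)\pc)/s$, dividing through by $(1-s)$, and using the abbreviations $s_1=\sd/(1-s)$, $s_2=(s-\sd)/(1-s)$, and $c=(q-1)/(q-\shares)^{\shares}$, the stationarity equation rearranges to $(s_2+\pc)\pc^{\shares}=c\,(s_1-\pc)(1-\shares\pc)^{\shares}$. Expanding $(1-\shares\pc)^{\shares}$ via the binomial theorem and collecting the coefficients of $\pc^k$ on each side produces exactly the polynomial $\sum_{j=0}^{\shares+1}b_j\pc^j$ with the $b_j$ listed in the theorem. The validity range follows from $\pz,\pc\in[0,1]$ and $\shares\pc\le 1$, and strict concavity of $f$ ensures that at most one interior stationary point exists; hence the real root lying in this range is the unique minimizer of the leakage, and $\pz$ is recovered from Lemma~\ref{lemma:sparsity} as stated. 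The main obstacle will be the careful bookkeeping that shows the marginal of a share is $\coeff$-independent and of the uniform-plus-spike form on $\F_q$, since this is what makes all $\shares$ shares contribute identical leakage and reduces the problem to a tractable one-dimensional optimization.
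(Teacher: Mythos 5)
Your proposal is correct and arrives at exactly the paper's stationarity condition and polynomial, but by a genuinely different route. The paper works throughout with the KL-divergence form of the leakage (Lemma~\ref{lemma:leakage_strag}), forms a Lagrangian with the three affine constraints \eqref{eq:gconstraint1strag}--\eqref{eq:gconstraint3strag}, and eliminates the multipliers by summing the four scaled partial-derivative equations to obtain $\pz (\pcinv)^\shares = \pzinv \pc^\shares$, which is literally your condition $(1-\pz)(q-\shares)^{\shares}\pc^{\shares}=(q-1)\,\pz\,(1-\shares\pc)^{\shares}$. You instead observe that $\I_q(\rvA;\rvA+\coeff\rvR)=\entropy(\rvA+\coeff\rvR)-\entropy(\rvA+\coeff\rvR\mid\rvA)$, that the marginal of each share is pinned to the spike-plus-uniform law $(\sd,\tfrac{1-\sd}{q-1},\dots)$ once $\sd$ is fixed (I checked your bookkeeping: for $b\neq 0$ exactly $\shares-1$ values $a=b/(1-\coeff/\coeff[j])$, $j\neq i$, contribute mass $\pc$, and the total indeed collapses to $\tfrac{1-\sd}{q-1}$ using the uniformity of the non-zero entries of $\rvA$), and that the conditional entropy splits as $s\,\phi(\pz)+(1-s)\,\psi(\pc)$ by bijectivity of $r\mapsto a+\coeff r$. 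This reduces the problem to a one-dimensional strictly concave maximization after substituting the sparsity constraint of Lemma~\ref{lemma:sparsity}, with no multipliers needed; the final binomial expansion reproduces the coefficients $b_0,\dots,b_{\shares+1}$ exactly, and your justification of the root's admissible range ($\pz\geq 0$ gives $\pc(1-s)\leq\sd$, $\pcinv\geq 0$ gives $\shares\pc\leq 1$) matches the paper's. What your approach buys is the structural insight that, under Construction~\ref{def:model}, minimizing leakage at fixed $\sd$ is purely a maximum-conditional-entropy problem and that all $\shares$ shares contribute identical leakage for intrinsic reasons; what the paper's approach buys is that it stays within the general framework of Proposition~\ref{prop:matrix_optimization} and parallels the Lagrangian machinery used for Lemma~\ref{lemma:sdiff0}. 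The one place to be careful in a full write-up is the marginal computation you flag yourself: it genuinely requires the non-zero entries of $\rvA$ to be uniform over $\Fqstar$ and the $\coeff$'s to be distinct, but both are assumptions the paper makes.
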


\begin{figure}[t]
    \centering
    \resizebox{.9\linewidth}{!}{\input{tikz/leakage_per_share}}
    \vspace{-1ex}
    \caption{Relative leakage of $\sharea$ and $\shareb$ over their sparsity for different number of shares $\shares$, $s=0.95$ and $q \in \{89, 5081\}$.}
\label{fig:leakage_over_sparsity}
    \vspace{-0.5cm}
\end{figure}

We depict in \cref{fig:leakage_over_sparsity} the relative leakage of each share as a function of $\sd$ and the number $\shares$ of total shares. We consider the finite field $\Fq$ for $q=89$ and $q=5081$. As $\shares$ grows\footnote{This statement holds for $\shares \ll q$. If $\shares$ approaches $q$, the constraints imposed by the optimization problem in \cref{prop:matrix_optimization} lead to non-sparse secret sharing.}, each share leaks more about the private input matrix for a fixed $\sd$. For large values of $q$, the loss of privacy incurred by increasing $\shares$ becomes negligible. %
In addition, for fixed $\shares$ and fixed $\sd$, increasing $q$ decreases the relative leakage per share $\relleakage$. For example, for $q=89$ and $\sd = 0.9$, $\relleakage$ is equal to $0.234$ for $\shares = 2$ and increases to $0.284$ when increasing $\shares$ to $5$. However, for $q=5081$ and $\sd=0.9$, $\relleakage$ increases from $0.199$ to $0.207$ when increasing $\shares$ from $2$ to $5$. Notice that for fixed $\shares$ and $\sd$, increasing $q$ decreases $\relleakage$, cf. \cref{fig:leakage_over_sparsity}.

\section{Analysis of the Trade-off between Sparsity Privacy and Number of Shares} \label{sec:analysis}
In this section, we analyze the sparsity of the shares and the leakage about $\rvA$ for a fixed $\shares$ through $\rvA+\coeff\rvR$ for all $i \in [\shares]$.

The goal is to design the PMF of $\rvR$ given $\rvA$ that minimizes the leakage $\mutinf\left(\rvA+\coeff\rvR; \rvA\right)$ while respecting the desired sparsity $\sd\define\sda$ and the coefficients $\coeff$ for all $i \in [\shares]$ as given in \cref{prop:matrix_optimization}. Designing the conditional PMF boils down to finding $\pra \define \Pr(\rvR = r|\rvA = a)$ for all $r,a \in \mathbb{F}_q$ resulting in a proper PMF and guaranteeing the desired level of sparsity. Hence, we optimize the leakage of the shares over the set $\mathcal{P} \define \{p_{r\vert a}: r,a\in \mathbb{F}_q\}$ while considering the coefficients $\coeff$ of the secret sharing scheme. This is summarized in \cref{prop:matrix_optimization}, which yields an optimization problem for the average leakage through the shares. Expressing mutual information in terms of KL divergence and performing variable substitutions gives the optimization as a function of $\mathcal{P}$.

\begin{proposition} \label{prop:matrix_optimization}
Let $\sd$ be the desired sparsity for each of the secret shares, then the optimal leakage is given by
\begin{align*}
    \lossopt &= \optimizer \frac{1}{\shares} \sum_{i\in [\shares]} \leakage{i} = \optimizer \frac{1}{\shares} \sum_{i \in [\shares]} \mutinf\left(\rvA+\coeff\rvR; \rvA\right) \\
    &=\optimizer \begin{aligned}[t] &\sum_{i \in [\shares]} \kl{\mathrm{P}_{A,\rvA+\coeff\rvR}}{\pa \mathrm{P}_{\rvA+\coeff\rvR}} \end{aligned} \\
    &=\optimizer \!\!\! \sum_{i \in [\shares]} \sum_{a,b\in\mathbf{F}_q} \!\!\! \pa(a) \!\left(\pra[(b-a)/\coeff][a] \log\frac{\pra[(b-a)/\coeff][a]}{\papr(b)} \right)\!
\end{align*}
subject to \vspace{-0.68cm}
\begin{align*}
    \forall a\in\Fq: \pra[0][a] + \sum_{r\in\Fqstar} \pra[r][a] - 1 &= 0, \\[-5pt] %
    \!\forall i \in [\shares]: \pra[0][0] \cdot \pa(0) + \sum_{a\in\Fqstar} \pra[-a/\coeff][a] \cdot \pa(a) -\sd &= 0. \! %
\end{align*}
\end{proposition}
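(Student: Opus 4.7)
My plan is to derive the chain of equalities by direct manipulation, treating the proposition as a reformulation of the leakage-minimization problem into a concrete constrained optimization over the conditional PMF table $\mathcal{P}=\{\pra:r,a\in\Fq\}$. The starting point is the per-share leakage $\leakage{i}=\mutinf(\rvA+\coeff\rvR;\rvA)$ introduced in the Privacy Measure paragraph; since the minimizer is unaffected by positive scaling, the factor $1/\shares$ can be absorbed into the objective without changing the optimal $\mathcal{P}$ (only the optimal value scales accordingly).

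First I would apply the textbook identity $\mutinf(X;Y)=\kl{\mathrm{P}_{X,Y}}{\mathrm{P}_X\mathrm{P}_Y}$ to each mutual-information term, giving the second form of the objective. Next I would perform a change of variable on the dummy index: because $\coeff\in\Fqstar$ is invertible in $\Fq$, for any $(a,b)\in\Fq^2$ the event $\{\rvA=a,\ \rvA+\coeff\rvR=b\}$ coincides with $\{\rvA=a,\ \rvR=(b-a)/\coeff\}$, so $\mathrm{P}_{\rvA,\rvA+\coeff\rvR}(a,b)=\pa(a)\,\pra[(b-a)/\coeff][a]$ and $\papr(b)=\sum_{a\in\Fq}\pa(a)\,\pra[(b-a)/\coeff][a]$. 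Plugging these into the explicit expansion of the KL divergence yields the closed form of the objective stated in the proposition.

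The two constraint families are then immediate. The identity $\pra[0][a]+\sum_{r\in\Fqstar}\pra[r][a]=1$ for each $a\in\Fq$ simply states that $\{\pra\}_{r\in\Fq}$ is a proper PMF, while the sparsity constraint follows from imposing $\sparsity(\sharea)=\Pr(\rvA+\coeff\rvR=0)=\sd$: expanding this probability as $\pa(0)\pra[0][0]+\sum_{a\in\Fqstar}\pa(a)\,\pra[-a/\coeff][a]$ produces exactly the displayed equation for each $i\in[\shares]$.

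The main obstacle is nothing conceptual but rather notational bookkeeping: I have to verify that $r\mapsto(b-a)/\coeff$ is a bijection on $\Fq$ for each fixed $a$ (which holds precisely because $\coeff\neq 0$), so that summing over $r\in\Fq$ and summing over $b\in\Fq$ are interchangeable, and I must write the constraints in a form suitable for the Lagrangian analysis underlying Theorem~\ref{thm:optimal_pmf_straggler_tol}. No convexity or duality argument is required for the proposition itself; it is a clean reformulation step that sets up the explicit optimal PMF derived in the theorem.
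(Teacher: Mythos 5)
Your proposal is correct and matches the paper's own (very brief) justification: the paper simply states that the proposition follows by "expressing mutual information in terms of KL divergence and performing variable substitutions," which is exactly the identity $\mutinf(X;Y)=\kl{\mathrm{P}_{X,Y}}{\mathrm{P}_X\mathrm{P}_Y}$ followed by the bijective change of variable $r\mapsto(b-a)/\coeff$ (valid since $\coeff\in\Fqstar$) and the observation that the constraints encode PMF normalization and $\Pr(\rvA+\coeff\rvR=0)=\sd$. Your remark that the dropped $1/\shares$ factor affects only the optimal value and not the minimizer is a correct and worthwhile clarification of a notational looseness in the statement.
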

\vspace{-1ex}
\cref{prop:matrix_optimization} describes a non-linear convex optimization problem with affine constraints that can be solved using Lagrange multipliers \cite{rockafellar1993lagrange}. 
To reduce the complexity of the optimization problem, we will abstract the problem to only optimize over values in $\mathcal{P}$ that are of particular interest. Therefore, we rely on the result of \cref{lemma:sdiff0} and \cref{def:model}. %

As a consequence of the conditional PMF in \cref{def:model}, the constraints of the optimization in \cref{prop:matrix_optimization} become
\vspace{-.1cm}
\begin{align}
    c_1(\pz,\pzinv) &\define \pz + (q-1) \pzinv - 1 &&= 0, \label{eq:gconstraint1strag} \\
    c_2(\pc,\pcinv) &\define \shares \pc + (q-\shares) \pcinv - 1 &&= 0, \label{eq:gconstraint2strag} \\
    c_3(\pz,\pext) &\define \pz s + \pc (1-s) - \sd &&= 0, \label{eq:gconstraint3strag}
\end{align}
where $\pz,\pzinv \define (1-\pz)/(q-1)$ and $\pcinv \define (1-\shares \pc)/(q-\shares)$ are non-negative and 1 at most.

\newcommand{\xlogx}[2]{\ensuremath{z(#1, #2)}}
For ease of notation, we define $\xlogx{x}{y} \define x\log(\frac{x}{y})$ for the rest of the paper. We now reformulate the objective function in \cref{prop:matrix_optimization} to conditional PMFs as in \cref{def:model}. Since the PMF is symmetric in the shares, i.e., all shares have the same level of sparsity, instead of the average leakage we express the objective as the leakage of one share.

\begin{lemma} \label{lemma:leakage_strag}
Considering PMFs of the form given in \cref{def:model}, the leakage $\totalleakagestragglers \define \mathrm{L}_i$ for any share $i\in[\shares]$ is given by
\begin{align*}
&\totalleakagestragglers = s \big[ \xlogx{\pz}{\sd} +(q-1)\xlogx{\pzinv}{\sdinv} \big] \\
& \!+ \! (1-s) \big[ \xlogx{\pc}{\sd} \!+\! (\shares-1) \xlogx{\pc}{\sdinv} \!+\! (q-\shares) \xlogx{\pcinv}{\sdinv} \big]\!,
\end{align*}
where $\sdinv$ describes the probability of observing any non-zero entry in the share and is given by $\frac{1-\sd}{q-1}$.
\begin{proof}
\vspace{-2ex}
The statement follows from evaluating and simplifying the objective in \cref{prop:matrix_optimization} according to the construction of the conditional PMF in \eqref{eq:gdependent_on_0_final} and \eqref{eq:gdependent_on_nz_final}.
\end{proof}
\end{lemma}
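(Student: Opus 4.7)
The plan is to start from the explicit form of $\mathrm{L}_i$ given in \cref{prop:matrix_optimization} and perform a direct, piecewise evaluation of the double sum using the construction in \cref{def:model}. Since the conditional PMF $\pra[(b-a)/\coeff][a]$ takes only four distinct values ($\pz$, $\pzinv$, $\pc$, $\pcinv$), once I understand how the pairs $(a,b)$ partition according to which of these four values applies and how $\papr(b)$ depends on $b$, the double sum collapses into exactly the six terms in the stated formula. The plan therefore has two pieces: first compute the marginal PMF $\papr$, then split the outer sum on $a$ into the cases $a = 0$ and $a \neq 0$.

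For the marginal, I would show $\papr(0) = \sd$ (immediate from \cref{lemma:sparsity}) and, more importantly, $\papr(b) = \sdinv$ for every $b \neq 0$. For fixed $b \neq 0$, marginalization gives $\papr(b) = s\,\pzinv + \tfrac{1-s}{q-1}\sum_{a \neq 0}\pra[(b-a)/\coeff][a]$. Among the $q-1$ non-zero values of $a$, the quantity $(b-a)/\coeff$ lies in $\{-a/\alpha_j\}_{j\in[\shares]}$ exactly for $a = b/(1 - \coeff/\alpha_j)$ with $j \neq i$ (the case $j = i$ forces $b = 0$ and is excluded), yielding $\shares - 1$ contributions of $\pc$ and $q-\shares$ contributions of $\pcinv$. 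Substituting the normalizations $\pz + (q-1)\pzinv = 1$ and $\shares\pc + (q-\shares)\pcinv = 1$ together with the identity $\sd = s\pz + (1-s)\pc$ from \cref{lemma:sparsity} collapses this to $(1-\sd)/(q-1) = \sdinv$.

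With the marginal in hand, the case split for the outer sum is routine bookkeeping. For $a = 0$ we have $r = b/\coeff$, so $r = 0$ iff $b = 0$: one term pairs $(\pz, \sd)$ and $q-1$ terms pair $(\pzinv, \sdinv)$; weighting by $\pa(0) = s$ and recognizing the $\xlogx{\cdot}{\cdot}$ form gives the first bracket. For $a \neq 0$, the values of $b$ with $\pra = \pc$ are $b = a(1 - \coeff/\alpha_j)$ for $j \in [\shares]$: the case $j = i$ contributes $b = 0$ paired with $\sd$; the $\shares - 1$ cases $j \neq i$ contribute distinct non-zero $b$'s each paired with $\sdinv$; and the remaining $q - \shares$ non-zero $b$'s give $\pcinv$ paired with $\sdinv$. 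Weighting by $\pa(a) = (1-s)/(q-1)$ and summing over the $q - 1$ non-zero $a$'s (all contributions being equal by symmetry) produces the second bracket. The main obstacle is the marginal step: showing uniformity of $\papr$ on $\Fqstar$ is what permits all non-zero $b$-terms to be grouped under the common normalization $\sdinv$, and it is the only place where the consistency between $\pz, \pc$ and the prescribed sparsity $\sd$ is actually exploited.
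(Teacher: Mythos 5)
Your proposal is correct and follows exactly the route the paper intends — its proof is the one-line remark that the formula ``follows from evaluating and simplifying the objective in Proposition~\ref{prop:matrix_optimization} according to the construction of the conditional PMF,'' and your write-up is precisely that evaluation carried out in full, including the key step the paper leaves implicit: that the marginal $\papr$ is uniform and equal to $\sdinv$ on $\Fqstar$ (via the counting of the $\shares-1$ values $a=b/(1-\coeff/\alpha_j)$, $j\neq i$, and the normalizations of $\pz,\pc$). The case bookkeeping for $a=0$ and $a\neq 0$ and the resulting six terms all check out.
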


\vspace{-2ex}
Given the conditional PMF in \cref{def:model} and its impact on the constraints (cf. \eqref{eq:gconstraint1strag}-\eqref{eq:gconstraint3strag}) and the leakage (cf. \cref{lemma:leakage_strag}), %
we show in the following how to solve the non-linear convex optimization problem and hence how to prove \cref{thm:optimal_pmf_straggler_tol}.
\vspace{-2ex}
\begin{proof}[Proof of \cref{thm:optimal_pmf_straggler_tol}]
To find the distribution of $\rvR$ given $\rvA$ that minimizes the information-leakage of $\rvA$ through the shares $\rvA + \coeff \rvR$, we utilize the method of Lagrange multipliers \cite{rockafellar1993lagrange} to combine the objective function, i.e., the leakage from \cref{lemma:leakage_strag} with the constraints in \eqref{eq:gconstraint1strag}-\eqref{eq:gconstraint3strag}. By doing that, the objective function to be minimized can be expressed as
\begin{align}
    \lossabbrev \define & \totalleakagestragglers  + \lambda_1 c_1(\pz,\pzinv) \nonumber \\ %
    &+ \lambda_2 c_2(\pc,\pcinv) + \lambda_3 c_3(\pz,\pc). \label{eq:leakage}
\end{align}

The minimization problem of \cref{prop:matrix_optimization} can now be solved by setting the gradient of the objective in \eqref{eq:leakage} to zero, i.e., $\gradstragglers \lossabbrev = 0$, which amounts to solving a system of seven equations with seven unknowns. 
We first utilize a subset of those equations
and show that given the objective $\lossabbrev$, we obtain from $\nabla_{\pz,\pzinv,\pc,\pcinv} \lossabbrev = 0$ the relation
\begin{align}
    \pz (\pcinv)^\shares &= \pzinv \pc^\shares. \label{eq:grad_obj_compact}
\end{align}
We use that $\frac{\partial \xlogx{x}{y}}{\partial x} = \log(x)-\log(y)+1$. Then, from \eqref{eq:leakage}, \cref{lemma:leakage_strag} and the constraints in \eqref{eq:gconstraint1strag}-\eqref{eq:gconstraint3strag}, we obtain the following non-linear system of equations for $\nabla_{\pz,\pzinv,\pc,\pcinv} \lossabbrev = 0$:
\begin{align*}
    \frac{\partial \lossabbrev}{\partial p_1} &= s\left(\log(p_1) - \log(\sd) + 1 \right) + \lambda_1 + s\lambda_3 = 0, \\
    \frac{\partial \lossabbrev}{\partial p_1^{\text{inv}}} &= s(q-1)\left(\log(p_1^{\text{inv}}) - \log(\sdinv) + 1\right) + (q-1) \lambda_1 = 0, %
    \end{align*}
    \vspace{4ex}
    \begin{align*}
    \frac{\partial \lossabbrev}{\partial \pc} &= \begin{aligned}[t] & (1-s)\Big(\log(\pc) - \log(\sd) + 1 +  (\shares-1)\\
    & \cdot  ( \log (\pc) - \log(\sdinv) + 1 ) \Big) \! + \! \shares \lambda_2 + (1-s) \lambda_3 = 0, \end{aligned} \\
    \frac{\partial \lossabbrev}{\partial \pcinv} &= \begin{aligned}[t] & (1-s)(q-\shares) \left(\log(\pcinv) - \log(\sdinv) + 1\right) \\
    & + (q-\shares) \lambda_2 = 0. \end{aligned}
\end{align*}
By scaling, we simplify and reformulate this system of equations and obtain
\begin{align}
    0 &= \log(p_1) - \log(\sd) + 1 + \lambda_1^\prime + \lambda_3^\prime, \label{eqline:stragpartial_p1} \\
    0 &= -\log(p_1^{\text{inv}}) + \log(\sdinv) - 1 - \lambda_1^\prime, \label{eqline:stragpartial_p1inv} \\
    0 &= \begin{aligned}[t] - \shares \log(\pc) + &\log(\sd) + (\shares-1) \log(\sdinv) \\& - \shares - \shares \lambda_2^\prime - \lambda_3, \end{aligned} \label{eqline:stragpartial_pc} \\
    0 &= \shares \log(\pcinv) - \shares \log(\sdinv) + \shares + \shares \lambda_2^\prime, \label{eqline:stragpartial_pcinv}
\end{align}
where $\lambda_1^\prime = \lambda_1/s$, $\lambda_2^\prime = \lambda_2/(1-s)$ and $\lambda_3^\prime = \lambda_3/s$. 
Summing \eqref{eqline:stragpartial_p1}-\eqref{eqline:stragpartial_pcinv} 
leads to $\log\left(\pz {\pcinv}^\shares\right) - \log\left(\pzinv \pc^\shares\right)=0$. Exponentiating both sides yields the result stated in \eqref{eq:grad_obj_compact}. The condition on the root $\pc$ is a consequence of \eqref{eq:gconstraint2strag} and \eqref{eq:gconstraint3strag}.

We now utilize \eqref{eq:grad_obj_compact} and $\nabla_{\lambda_1, \lambda_2, \lambda_3} \lossabbrev = 0$, where the latter system of equations is linear and equivalent to \eqref{eq:gconstraint1strag}-\eqref{eq:gconstraint3strag}. We solve \eqref{eq:gconstraint1strag}, \eqref{eq:gconstraint2strag} and \eqref{eq:gconstraint3strag} for $\pz,\pzinv,\pcinv$ as a function of $\pc$ to obtain
\vspace{-.1cm}
\begin{align}
    (q-1) \frac{\sd-(1-s)\pc}{s-\sd+(1-s)\pc} = \left((q-\shares)\frac{\pc}{1-\shares\pc}\right)^\shares\!\!\!\!. \label{eq:grad_obj_compact_stragglers}
\end{align}
We reformulate \eqref{eq:grad_obj_compact_stragglers} to get the following polynomial in $\pc$ with degree $\shares+1$:
\vspace{-.2cm}
\begin{align*}
    -\pc^{\shares+1} - \! s_2 \pc^\shares + c s_1 \! \sum_{k=1}^\shares \binom{\shares}{k} (-\shares \pc)^k - \pc c \! \sum_{k=0}^\shares (-\shares \pc)^k = 0.
\end{align*}

\vspace{-.1cm}
\noindent Sorting the coefficients by powers of $\pc$ results in the root finding problem given in \cref{thm:optimal_pmf_straggler_tol}. Due to convexity, the solution is unique, i.e., local and global minimizers are the same.
The results for $\pz,\pzinv$ and $\pcinv$ can be obtained from $\nabla_{\lambda_1,\lambda_2,\lambda_3} \lossabbrev = 0$, i.e., from the constraints stated in \eqref{eq:gconstraint1strag}-\eqref{eq:gconstraint3strag}.
\end{proof}

\section{Conclusion}
We considered the problem of outsourcing the multiplication of two private and sparse matrices to untrusted workers. To guarantee straggler tolerance, privacy, and sparsity in the computational tasks, we designed a generalization of Shamir's secret sharing scheme for a threshold $t=2$ that outputs shares with a desired sparsity. We showed that the price of sparsity and straggler tolerance is a weaker privacy guarantee. More precisely, for a desired sparsity level, increasing the straggler tolerance by increasing the number of shares $\shares$ results in fundamentally weaker privacy guarantees. %
Nevertheless, the privacy degradation becomes negligible and even vanishes when fixing $\shares$ and increasing the field size $q\gg n$. Future work includes the generalization to a sparse McEliece-Sarwate \cite{McESa81} secret sharing scheme, i.e., secret sharing with threshold $t\geq 3$ and an arbitrary number of colluding workers $1\leq z\leq t-1$. This case is challenging since the padding matrices must be drawn dependently on each other and all secret inputs simultaneously.

\clearpage
\pagebreak
\bibliographystyle{IEEEtran}
\bibliography{IEEEabrv,literature}

\nobalance

\appendix
\subsection{Proof of \cref{lemma:sdiff0}} \label{app:optimal_delta}

The authors of \cite{xhemrishi2022efficient} numerically showed that for a sparse secret sharing scheme with two shares $\bfR$ and $\bfA + \bfR$ with sparsities $\sr$ and $\sar$, respectively, the minimal leakage is obtained for $\bfR$ and $\bfA+\bfR$ having the same desired sparsities $\sr=\sar$. \cref{lemma:sdiff0} proves this result theoretically. For completeness and self-containment of the paper, we describe here the setting consider in \cite{xhemrishi2022efficient}, which is as follows.

For desired sparsity levels of $\bfR$ and $\bfA+\bfR$, the authors of \cite{xhemrishi2022efficient} proposed a suitable PMF of $\rvR$ that minimizes the leakage. Thereby, the authors make the assumption of the entries $\rvA$ of $\bfA$ being independent among each other, hence also the entries of $\rvR$. Designing the PMF comes down to choosing $p_{r\vert a} \define \Pr(\rvR = r|\rvA = a)$ for all $r,a \in \mathbb{F}_q$ such that they form a proper PMF and meet the desired sparsity constraints for $\bfR$ and $\bfA + \bfR$. %
Intuitively, the choice of the $\pra$'s will affect the leakage through the shares. To minimize the leakage, we analyze the sum of the shares' leakages, which we ultimately find to be the same as minimizing the maximum.

While the authors of \cite{xhemrishi2022efficient} optimize the leakage for a given $\sr$ and $\sar$, we seek to optimize the leakage for a certain average sparsity $\savg\define\frac{\sr+\sar}{2}$. The difference of the parameters $\sar$ and $\sr$ is a degree of freedom that can be used to minimize the leakage. Hence, we also optimize over a parameter $\sdiff$ such that  $\sar = \savg + \sdiff$ and $\sr = \savg - \sdiff$. \cref{prop:general_optimization} states the general optimization problem required to solve to obtain the least possible leakage.
\begin{proposition} \label{prop:general_optimization}
Let $\mathcal{P} \define \{p_{r|a}: r,a\in \mathbb{F}_q\}$, then the minimum leakage for a given average sparsity $\savg$ is obtained by solving the following optimization problem:
\begin{align*}
    \lossopt &= \optimizerdelta \mutinf\left(\rvR; \rvA\right) + \mutinf\left(\rvA+\rvR; \rvA\right) \define \optimizerdelta \leakage{1} + \leakage{2} \\
    &=\optimizerdelta \begin{aligned}[t] &\kl{\paandr}{\pa \pr} + \kl{\paandapr}{\pa \papr} \end{aligned} \\
    &=\optimizerdelta \!\!\! \sum_{a,b\in\mathbf{F}_q} \!\! \pa(a) \!\left(\! \pra[b][a] \log\frac{\pra[b][a]}{\pr(b)} + \pra[b-a][a] \log\frac{\pra[b-a][a]}{\papr(b)} \!\right)\!
\end{align*}
subject to %
\begin{align*}
    \forall a\in\Fq: \pra[0][a] + \sum_{r\in\Fqstar} \pra[r][a] - 1 &= 0, \\[-5pt]
    \pra[0][0] \cdot \pa(0) + \sum_{a\in\Fqstar} \pra[0][a] \cdot \pa(a) - \savg + \sdiff &= 0, \\[-5pt]
    \pra[0][0] \cdot \pa(0) + \sum_{a\in\Fqstar} \pra[-a][a] \cdot \pa(a) -\savg - \sdiff &= 0.
\end{align*}
\end{proposition}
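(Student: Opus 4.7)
The plan is to establish \cref{prop:general_optimization} as a direct rewriting of the leakage minimization problem for the two-share scheme of \cite{xhemrishi2022efficient}, using standard information-theoretic identities together with the sparsity constraints translated into the conditional PMF variables.

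First, I will unpack the definition of leakage. Since the entries of $\bfA$ are i.i.d. and the $p_{r\vert a}$'s are chosen entry-wise, analysis reduces to a single pair $(\rvA,\rvR)$. The leakages of the two shares are then $\leakage{1} = \mutinf(\rvR;\rvA)$ and $\leakage{2} = \mutinf(\rvA+\rvR;\rvA)$, and minimizing the \emph{sum} $\leakage{1}+\leakage{2}$ is the right combined objective because, by symmetry of the two-share setting, the two per-share leakages play identical roles in the joint view of an honest-but-curious observer. Applying the identity $\I_q(\rvX;\rvY)=\mathrm{D}_{\mathrm{KL}}(\mathrm{P}_{\rvX,\rvY}\Vert \mathrm{P}_\rvX\mathrm{P}_\rvY)$ to each term produces the second line of the objective, namely $\kl{\paandr}{\pa\pr}+\kl{\paandapr}{\pa\papr}$.

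Next, I will expand each KL divergence in terms of the conditional PMF. Writing $\mathrm{P}_{\rvA,\rvR}(a,r)=\pa(a)\,\pra[r][a]$ and, via the change of variable $b=a+r$, $\mathrm{P}_{\rvA,\rvA+\rvR}(a,b)=\pa(a)\,\pra[b-a][a]$, the marginal $\pa(a)$ cancels inside each logarithm. Summing the two resulting double-sums over a common index set $(a,b)\in\Fq^2$ collapses the objective to the third line stated in the proposition. Then I will derive the three constraint families. The condition that each row of the conditional PMF is a valid probability distribution gives $\pra[0][a]+\sum_{r\in\Fqstar}\pra[r][a]-1=0$ for every $a\in\Fq$. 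The sparsity of $\bfR$ is $\Pr(\rvR=0)=\sum_{a\in\Fq}\pa(a)\pra[0][a]$; isolating the $a=0$ term and imposing $\Pr(\rvR=0)=\sr=\savg-\sdiff$ yields the second constraint. For $\bfA+\bfR$, $\Pr(\rvA+\rvR=0)=\sum_{a\in\Fq}\pa(a)\Pr(\rvR=-a\mid\rvA=a)=\pa(0)\pra[0][0]+\sum_{a\in\Fqstar}\pa(a)\pra[-a][a]$, which set equal to $\sar=\savg+\sdiff$ produces the third constraint, exactly as stated.

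Finally, I will argue feasibility and that the stated program really does characterize the minimum. The objective is a sum of two KL divergences, each jointly convex in $(\pa\,\pra[\cdot][\cdot],\pr)$ and $(\pa\,\pra[\cdot][\cdot],\papr)$; since the marginals $\pr$ and $\papr$ are themselves linear functions of the decision variables $\{\pra[r][a]\}$ (weighted by the fixed $\pa$), the overall objective is convex in $\mathcal{P}$. All constraints, including the two sparsity ones, are affine. Therefore the program is convex, any feasible conditional PMF matching the target $(\sr,\sar)$ is captured by some $(\mathcal{P},\sdiff)$ pair, and any optimum it attains is a global minimum of the leakage over all schemes of the form in \cite{xhemrishi2022efficient}. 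The only step that requires a bit of care is the change of variable in the second KL divergence and the correct splitting of the $a=0$ term in the sparsity constraint for $\bfA+\bfR$, which fixes the sign convention for $\sdiff$; the remainder of the argument is routine bookkeeping.
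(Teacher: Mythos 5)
Your proposal is correct and matches the paper's (largely implicit) treatment: the proposition is a direct reformulation of the leakage objective via the identity $\I_q(\rvX;\rvY)=\kl{\mathrm{P}_{\rvX,\rvY}}{\mathrm{P}_{\rvX}\mathrm{P}_{\rvY}}$, the change of variable $b=a+r$ in the second term, and the translation of the row-normalization and sparsity requirements $\Pr(\rvR=0)=\savg-\sdiff$, $\Pr(\rvA+\rvR=0)=\savg+\sdiff$ into affine constraints on $\mathcal{P}$. The convexity observation you add is consistent with what the paper invokes later when solving the program, so nothing is missing.
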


Based on the assumption of entries in $\rvA$ being identically distribution among the non-zero entries, we investigate the same scheme as proposed in \cite{xhemrishi2022efficient}, which we give for completeness in the following:
\begin{align}
 \label{eq:dependent_on_0}
    \Pr\{\Rij = r \lvert \Aij = 0\} \!&\!= \begin{cases} 
    \pz, &r = 0 \\   \pzinv , &r \neq 0,
    \end{cases} %
\end{align}
\begin{align}
\label{eq:dependent_on_nz}
    \Pr\{\Rij = r \lvert \Aij = a\} \!&\!= \begin{cases} 
    \pext, &r = 0 \\ \pnz, &r = -a, \\
    p_{2,3}^{\text{inv}} , &r \not\in \{0,-a\}, \\
    \end{cases}
\end{align}
where $r\in \F_q$, $a \in \F_q^*$, and $-a$ is the additive inverse of $a$ in $\F_q^*$. Further, $\pz,\pext,\pnz, \pzinv \define (1-\pz)/(q-1)$ and $p_{2,3}^{\text{inv}} \define (1-\pext-\pnz)/(q-2)$ are non-negative and at most $1$. Numerical results have shown \cite{xhemrishi2022efficient} that given the entries of $\bfA$ being i.i.d. and the non-zero entries uniformly distributed over $\Fq^*$, optimizing over PMFs as in \eqref{eq:dependent_on_0} and \eqref{eq:dependent_on_nz} is equivalent to optimizing over $q^2$ unknowns $\pra$. This justifies the usage of PMFs as in \eqref{eq:dependent_on_0} and \eqref{eq:dependent_on_nz}.

As a result of the construction above, the sparsities of $\sar$ and $\sr$ are given by \cref{lemma:sparsity_delta}.
\begin{lemma}\label{lemma:sparsity_delta}
We follow the rules in \eqref{eq:dependent_on_0} and \eqref{eq:dependent_on_nz} to construct two matrices $\bfR$ and $\bfA+\bfR$ dependently on $\bfA$ with sparsity $s$. Without loss of generality, we assume that the difference in sparsities between $\sar$ and $\sr$ is $2\sdiff > 0$, since by construction the scheme is fully symmetric in $\sr$ and $\sar$. The sparsity levels result in
\begin{align*}
    \sr &\define \savg - \sdiff = \pz s + \pext (1-s), \\
    \sar &\define \savg + \sdiff = \pz s + \pnz(1-s).
\end{align*}
\end{lemma}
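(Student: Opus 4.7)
The plan is to compute the sparsity levels $\sr$ and $\sar$ directly from the definition and the construction of the conditional PMF in \eqref{eq:dependent_on_0}--\eqref{eq:dependent_on_nz}. Recall from \cref{def:sparsity} that the sparsity of a matrix with i.i.d.\ entries equals the probability that a single entry is zero. Since the entries of $\bfA$ are i.i.d., and $\bfR$ is generated entry-wise from $\bfA$ through the given conditional PMF, the entries of both $\bfR$ and $\bfA+\bfR$ are i.i.d.\ as well. Hence it suffices to compute $\Pr\{\rvR=0\}$ and $\Pr\{\rvA+\rvR=0\}$ using the law of total probability, conditioning on the value of $\rvA$.

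For $\sr$, the plan is to split according to whether $\rvA=0$ or $\rvA\in\Fqstar$. Using $\Pr\{\rvA=0\}=s$ together with the conditional probabilities in \eqref{eq:dependent_on_0} and \eqref{eq:dependent_on_nz}, which give $\Pr\{\rvR=0\mid\rvA=0\}=\pz$ and $\Pr\{\rvR=0\mid\rvA=a\}=\pext$ for every $a\in\Fqstar$, the marginal probability $\Pr\{\rvR=0\}$ collapses to $\pz\,s+\pext(1-s)$. No further use of the distribution of $\rvA$ over $\Fqstar$ is needed since $\pext$ does not depend on the specific non-zero value of $a$.

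For $\sar$, the plan is to write $\Pr\{\rvA+\rvR=0\}=\sum_{a\in\Fq}\Pr\{\rvR=-a\mid\rvA=a\}\Pr\{\rvA=a\}$. From the construction, $\Pr\{\rvR=0\mid\rvA=0\}=\pz$ contributes $\pz\,s$ from the $a=0$ term, while for each $a\in\Fqstar$ we have $\Pr\{\rvR=-a\mid\rvA=a\}=\pnz$ by \eqref{eq:dependent_on_nz}. Here it is important to observe that the assumption that the non-zero entries of $\rvA$ are uniform over $\Fqstar$, together with the symmetric construction of the PMF, ensures that the coefficient of the ``self-cancelling'' event $\rvR=-a$ is the same constant $\pnz$ for every $a\in\Fqstar$. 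Summing over $\Fqstar$ therefore yields $\pnz(1-s)$, and adding the $a=0$ term gives $\pz\,s+\pnz(1-s)$, matching $\sar$.

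The identifications $\sr=\savg-\sdiff$ and $\sar=\savg+\sdiff$ then follow by definition of $\savg$ and $\sdiff$ in the statement. There is no real obstacle here: the proof is a direct bookkeeping argument, with the only subtlety being to justify why the conditional probability $\pnz$ can be factored out of the sum over $a\in\Fqstar$, which is exactly what the symmetric form of the construction in \eqref{eq:dependent_on_nz} and the uniformity of non-zero entries of $\rvA$ guarantee.
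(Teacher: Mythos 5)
Your proposal is correct and is exactly the computation the paper's one-line proof alludes to: a law-of-total-probability calculation of $\Pr\{\rvR=0\}$ and $\Pr\{\rvA+\rvR=0\}$ conditioned on $\rvA$, using the construction in \eqref{eq:dependent_on_0} and \eqref{eq:dependent_on_nz}. One small over-attribution: the uniformity of the non-zero entries of $\rvA$ over $\Fqstar$ is not actually needed here --- the constant $\pnz$ factors out of the sum purely because the construction assigns the same conditional probability to $\rvR=-a$ for every $a\in\Fqstar$; the uniformity assumption only matters for the optimality of this PMF family, not for the sparsity formula.
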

\begin{proof}
    This result directly follows from the construction in \eqref{eq:dependent_on_0} and \eqref{eq:dependent_on_nz} and the PMF of $\rvA$.
\end{proof}

Using the conditional PMF from above, we solve the optimization problem stated in \cref{prop:general_optimization}, which can be reformulated as given in \cref{lemma:leakage_delta}.

\begin{lemma} \label{lemma:leakage_delta}
Considering PMFs of the form given in \eqref{eq:dependent_on_0} and \eqref{eq:dependent_on_nz}, the total leakage $\totalleakagesdiff \define \mathrm{L}_1+\mathrm{L}_2$ is given by
\begin{align*}
&\totalleakagesdiff = s \bigg[ \xlogx{\pz}{\sar} + \xlogx{\pz}{\sr} \\
&+(q-1) \xlogx{\pzinv}{\sarinv} + \xlogx{\pzinv}{\srinv} \bigg] \nonumber\\
& \!+\!(1-s) \!\cdot\! \bigg[ \xlogx{\pext}{\sarinv} + \xlogx{\pext}{\sr} + \xlogx{\pnz}{\sar} \\ &+ \xlogx{\pnz}{\srinv} +(q-2) \xlogx{\ptwoinv}{\sarinv} + \xlogx{\ptwoinv}{\srinv} \bigg],
\end{align*}
subject to
\begin{align}
    c_1(\pz,\pzinv) &\define \pz + (q-1) \pzinv - 1 = 0, \label{eq:constraint1} \\
    c_2(\pext,\pnz,\pnzinv) &\define \pext + \pnz + (q-2) \pnzinv - 1 = 0, \label{eq:constraint2} \\
    c_3(\pz,\pext) &\define \pz s + \pext (1-s) - \sr = 0, \label{eq:constraint3} \\
    c_4(\pz,\pnz) &\define \pz s + \pnz (1-s) - \sar = 0. \label{eq:constraint4}
\end{align}
where $\sr \define \savg - \sdiff$, $\sar = \savg + \sdiff$, $\srinv \define (1-\savg+\sdiff)/(q-1)$, $\sarinv \define (1-\savg-\sdiff)/(q-1)$.
\begin{proof}
We obtain this result by expressing the mutual information in terms of KL-divergences, using the conditional of $\rvR$ given $\rvA$ in \eqref{eq:dependent_on_0} and \eqref{eq:dependent_on_nz}, and simplifying.
\end{proof}
\end{lemma}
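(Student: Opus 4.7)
The plan is to prove Lemma \ref{lemma:leakage_delta} by a direct (if tedious) computation starting from the formulation of $\mathrm{L}_1+\mathrm{L}_2$ given in Proposition \ref{prop:general_optimization}, substituting the conditional PMF defined in \eqref{eq:dependent_on_0}--\eqref{eq:dependent_on_nz}, and carefully bookkeeping how many $(a,b)$ pairs fall in each case. The statement is really an identity together with constraints, so no optimization or inequality is required; the work is all algebraic simplification.

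First, I would compute the marginal distributions of $\rvR$ and $\rvA+\rvR$ that appear in the denominators of the KL-divergence terms. Using $\pa(0)=s$, $\pa(a)=(1-s)/(q-1)$ for $a\in\Fqstar$, and \eqref{eq:dependent_on_0}--\eqref{eq:dependent_on_nz}, the law of total probability gives $\pr(0)=\sr$, $\pr(r)=\srinv$ for $r\neq 0$, and $\papr(0)=\sar$, $\papr(b)=\sarinv$ for $b\neq 0$. At this stage it is also immediate to record the constraints \eqref{eq:constraint1}--\eqref{eq:constraint4}: \eqref{eq:constraint1} and \eqref{eq:constraint2} are just the normalization of the conditional PMFs in \eqref{eq:dependent_on_0} and \eqref{eq:dependent_on_nz}, respectively; \eqref{eq:constraint3} and \eqref{eq:constraint4} are exactly the sparsity expressions for $\bfR$ and $\bfA+\bfR$ obtained in Lemma~\ref{lemma:sparsity_delta}.

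Next, I would split the double sum in Proposition \ref{prop:general_optimization} into $a=0$ and $a\in\Fqstar$. For $a=0$, the term $\pa(0)=s$ factors out, $\pra[b][0]$ equals $\pz$ when $b=0$ and $\pzinv$ otherwise, and the same classification applies to $\pra[b-a][0]=\pra[b][0]$; so this block of contributions collapses into $s\big[\xlogx{\pz}{\sr}+\xlogx{\pz}{\sar}+(q-1)\xlogx{\pzinv}{\srinv}+(q-1)\xlogx{\pzinv}{\sarinv}\big]$, matching the first bracket of the stated formula. For $a\in\Fqstar$, $\pa(a)=(1-s)/(q-1)$ is the same for all $q-1$ non-zero $a$'s, so I fix one such $a$, enumerate $b$, and then multiply by $(q-1)$, which cancels the $1/(q-1)$ and leaves the prefactor $(1-s)$.

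The one place that deserves care is the two shifts inside $\mathrm{L}_2$. For $\mathrm{L}_1$ with $a\neq 0$, the cases are $b=0$ giving $\pext/\pr(0)=\pext/\sr$, $b=-a$ giving $\pnz/\pr(-a)=\pnz/\srinv$, and the remaining $q-2$ values of $b$ giving $\ptwoinv/\srinv$. For $\mathrm{L}_2$ with $a\neq 0$, the relevant argument is $b-a$: $b=a$ sends $b-a\mapsto 0$ and uses $\papr(a)=\sarinv$, yielding $\pext/\sarinv$; $b=0$ sends $b-a\mapsto -a$ and uses $\papr(0)=\sar$, yielding $\pnz/\sar$; the remaining $q-2$ values of $b$ yield $\ptwoinv/\sarinv$. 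Summing the six resulting contributions with coefficient $(1-s)$ reproduces the second bracket of the claimed formula exactly. The only real obstacle is keeping this shift-bookkeeping straight (in particular, not confusing $\pext/\sr$ with $\pext/\sarinv$), which I would handle by writing a small case table for the two KL sums before combining.
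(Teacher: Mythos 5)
Your proposal is correct and follows exactly the route the paper's (one-line) proof indicates: expand each mutual information as a KL divergence, substitute the conditional PMF from \eqref{eq:dependent_on_0}--\eqref{eq:dependent_on_nz} together with the uniform-over-$\Fqstar$ law of the nonzero entries of $\rvA$, verify that the marginals of $\rvR$ and $\rvA+\rvR$ are $(\sr,\srinv)$ and $(\sar,\sarinv)$, and collect terms by cases --- your shift bookkeeping for $\mathrm{L}_2$ (in particular $b=a\mapsto\pext/\sarinv$ and $b=0\mapsto\pnz/\sar$) is right. The only remark worth adding is that your (correct) count shows the factors $(q-1)$ and $(q-2)$ in the lemma's display must be read as multiplying both adjacent $z(\cdot,\cdot)$ terms, which is evidently how the paper intends its compressed notation.
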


We are now ready to prove \cref{lemma:sdiff0}.

\begin{proof}[Proof of \cref{lemma:sdiff0}]
To determine the optimal $\sdiff$ that minimizes the leakage, we utilize the method of Lagrange multipliers to combine the objective function, i.e., the total leakage from \cref{lemma:leakage_delta}, with the constraints in \eqref{eq:constraint1}-\eqref{eq:constraint4}. Thus, the objective function to be minimized can be expressed as
\begin{align*}
    \lossabbrev &\define \totalleakagesdiff + \lambda_1 c_1(\pz,\pzinv) + \\
    & + \lambda_2 c_2(\pext,\pnz,\pnzinv) + \lambda_3 c_3(\pz,\pext) + \lambda_4 c_4(\pz,\pnz).
\end{align*}

This objective can be minimized by setting the gradient to zero, i.e., $\grad \lossabbrev = 0$, which amounts to solving a system of eleven equations with eleven unknowns. %

This is a convex optimization problem, since being a sum of convex functions in each of the unknowns and the constraints being affine in the unknowns. The function $\totalleakagesdiff$ is convex in $\sdiff$ by the composition theorem \cite[Ch. 3.2.4]{boyd2004convex} since $f(x) = -\log(x)$ is convex and $g(x) = \savg \pm \sdiff$ is convex. Hence, $f(g(x))$ is convex. For the variables $\pz,\pzinv,\pext,\pnz,\pnzinv$, one can show convexity by the perspective of the function $f(x) = -\log(x)$, which is $g(x,t) = t f(x/t) = t \log(t/x)$ \cite[Ex. 3.19]{boyd2004convex} and preserves convexity \cite[Ch. 3.2.6]{boyd2004convex}. Equivalently, one can argue by the convexity of KL-divergence.

We now state the system of equations given by $\nabla_{\pext,\pnz,\sdiff} \lossabbrev = 0$:
\begin{align}
    \frac{\partial \lossabbrev}{\partial \sdiff} &= s \bigg[ \pz \left(-\frac{1}{\sar} + \frac{1}{\sr}\right) + \nonumber \\
    & + (q-1)\pzinv \left( \frac{1}{1\!-\!\sar} - \frac{1}{1\!-\!\sr}\right)\bigg] + \nonumber \\
    & + (1-s) \! \bigg[ p_2 \! \left( \frac{1}{1\!-\!\sar} \! + \! \frac{1}{\sr} \right) \! - \! p_3 \left( \frac{1}{\sar} \! + \! \frac{1}{1\!-\!\sr} \right) \nonumber \\
    & + (q-2) \ptwoinv \! \left( \frac{1}{1\!-\!\sar} - \frac{1}{1\!-\!\sr}\right) \bigg] \! + \! \lambda_3 \! - \! \lambda_4 = 0 \label{eqline:sys1} \\
    \frac{\partial \lossabbrev}{\partial p_2} &= (1-s)\left(2\log(p_2) - \log(\sr) - \log(\sarinv) + 2\right)  \nonumber \\
    &+ \lambda_2 + (1-s) \lambda_3 = 0 \label{eqline:partial_p2} \\
    \frac{\partial \lossabbrev}{\partial p_3} &= (1-s)\left(2\log(p_3) - \log(\srinv) - \log(\sar) + 2\right) \nonumber \\
    &+ \lambda_2 + (1-s) \lambda_4 = 0 \label{eqline:partial_p3}
\end{align}
We will first calculate $\lambda_3 - \lambda_4$ from \eqref{eqline:partial_p2} and \eqref{eqline:partial_p3} to be later substituted in \eqref{eqline:sys1}. By computing $\eqref{eqline:partial_p2} - \eqref{eqline:partial_p3}$ and dividing the result by $(1-s)$, we obtain
\begin{align}
    &\lambda_3-\lambda_4 = \nonumber \\ 
    &= 2 (\log(p_3) - \log(p_2)) + \log\left(\frac{\sarinv}{\sar}\right) + \log\left(\frac{\sr}{\srinv}\right) \! \label{eqline:lambda_diff1} \\
    &= 2\log\left(\frac{\savg+\sdiff-s p_1}{1-s}\right) - 2\log\left(\frac{\savg-\sdiff-s p_1}{1-s}\right) + \nonumber \\
    &+ \log\left(\frac{1-\savg-\sdiff}{(q-1)(\savg+\sdiff)}\right) + \log\left(\frac{(q-1)(\savg-\sdiff)}{1-\savg+\sdiff}\right) \label{eqline:lambda_diff2} \\ 
    &= 2 \log\! \left(\! \frac{\savg+\sdiff-s p_1}{\savg-\sdiff-s p_1} \! \right) \!\! + \! \log\!\left(\! \frac{(1-\savg-\sdiff)(\savg-\sdiff)}{(1-\savg+\sdiff)(\savg+\sdiff)} \! \right) %
    \label{eqline:lambda34}
\end{align}
where from \eqref{eqline:lambda_diff1} to \eqref{eqline:lambda_diff2} we used \eqref{eq:constraint3} and \eqref{eq:constraint4} to express $p_2$ and $p_3$ in terms of $p_1$, respectively.
Simplifying \eqref{eqline:sys1} by use of $\nabla_{\lambda_1\lambda_2,\lambda_3,\lambda_4,\lambda_5} \lossabbrev = 0$, i.e., the constraints in \eqref{eq:constraint1}-\eqref{eq:constraint4} yields
\begin{align}
    &\frac{\partial \lossabbrev}{\partial \sdiff} - (\lambda_3 - \lambda_4) \nonumber \\
    & = \frac{1}{\sr} (s p_1 + (1-s) p_2) - \frac{1}{\sar} (s p_1 + (1-s)p_3)  + \nonumber \\
    & + \frac{1}{1-\sar} \underbrace{(s(q-1) \pzinv + (1-s) p_2 \! + \! (1-s)(q-2)\ptwoinv)}_{(a)} \nonumber \\
    & - \frac{1}{1-\sr} \underbrace{((q-1)\pzinv + (1-s)p_3 + (q-2)(1-s)\ptwoinv)}_{(b)} \nonumber \\
    &= \frac{\sr}{\sr} -\frac{\sar}{\sar} + \frac{1-\sar}{1-\sar} - \frac{1-\sr}{1-\sr} = 0, \label{eqline:partial_sdiff_simplified}
\end{align}
where we used that $\sr = s p_1 + (1-s) p_2$ from \eqref{eq:constraint3}, $\sar = s p_1 + (1-s) p_3$ from \eqref{eq:constraint4}. Further, we obtain $1-\sar = (a)$ by solving \eqref{eq:constraint1} for $p_1$, substituting $p_1$ in \eqref{eq:constraint4} and solving for $p_3$ and finally inserting in \eqref{eq:constraint2}, multiplying by $(1-s)$ and solving for $1-\sar$. Lastly, we obtain $1-\sr = (b)$ by solving \eqref{eq:constraint1} for $p_1$, substituting $p_1$ in \eqref{eq:constraint3} and solving for $p_2$ and finally inserting in \eqref{eq:constraint2}, multiplying by $(1-s)$ and solving for $1-\sr$.
Putting \eqref{eqline:lambda34} and \eqref{eqline:partial_sdiff_simplified} together, we obtain
\begin{align*}
    \frac{\partial \lossabbrev}{\partial \sdiff} + \lambda_3 - \lambda_4 = 2& \log\left(\frac{\savg+\sdiff-s p_1}{\savg-\sdiff-s p_1} \right) + \\
    &+ \log\left(\frac{(1-\savg-\sdiff)(\savg-\sdiff)}{(1-\savg+\sdiff)(\savg+\sdiff)} \right) = 0. 
\end{align*}
Since the optimization is convex and hence there only exists a unique solution, it follows that $\sdiff=0$ minimizes the objective in \cref{prop:general_optimization}. Since choosing $\sdiff \neq 0$ would increase the leakage through one of the shares, this is equivalent to minimizing the maximum of the shares' leakages.
\end{proof}

\clearpage

\end{document}